\newcommand{\p}{\mathbf{p}}
\newcommand{\pp}{\mathbf{P}}
\newcommand{\sw}{SW}
\newcommand{\opt}{OPT}
\newcommand{\s}{\mathbf{s}}
\newcommand{\poa}{\text{\normalfont PoA}}
\newcommand{\pos}{\text{\normalfont PoS}}
\newcommand{\ds}{\mathcal{D}}
\newcommand{\remove}[1]{}
\newcommand{\mk}[1]{{\color{black}{#1}}}
\newcommand{\tc}[1]{{\color{black}{#1}}}
\newenvironment{proof-sketch}{\noindent{\it Sketch of Proof.}\hspace*{1em}}{\qed\bigskip}
\newenvironment{proof-folklore}{\noindent{\it Proof (folklore).}\hspace*{1em}}{\qed\bigskip}
\newtheorem{theorem}{Theorem}
\newtheorem{lemma}[theorem]{Lemma}
\newtheorem{corollary}[theorem]{Corollary}
\theoremstyle{definition}
\newtheorem{example}{Example}
\newtheorem{remark}{Remark}
\begin{document}
\title{\bf Financial network games}

\author{Panagiotis Kanellopoulos \and
Maria Kyropoulou \and
Hao Zhou}

\date{School of Computer Science and Electronic Engineering \\ University of Essex, UK}

\maketitle
\begin{abstract}
We study financial systems from a game-theoretic standpoint. A financial system is represented by a network, where nodes correspond to firms, and directed labeled edges correspond to debt contracts between them.
The existence of cycles in the network indicates that a payment of a firm to one of its lenders might result to some incoming payment. So, if a firm cannot fully repay its debt, then the exact (partial) payments it makes to each of its creditors can affect the cash inflow back to itself.  We naturally assume that the firms are interested in their financial well-being (\emph{utility}) which is aligned with the amount of incoming payments they receive from the network.  This defines a game among the firms, that can be seen as utility-maximizing agents who can strategize over their payments.

We are the first to study \emph{financial network games} that arise under a natural set of payment strategies called priority-proportional payments. We investigate the existence and (in)efficiency of equilibrium strategies, under different assumptions on how the firms' utility is defined, on the types of debt contracts allowed between the firms, and on the presence of other financial features that commonly arise in practice. Surprisingly, even if all firms' strategies are fixed, the existence of a unique payment profile is not guaranteed. So, we also investigate the existence and computation of valid payment profiles for fixed payment strategies.

\end{abstract}
\section{Introduction}\label{sec:introduction}
A financial system comprises a set of institutions, such as banks and firms, that engage in financial transactions. The interconnections showing the \emph{liabilities} (financial obligations or debts) among the firms are represented by a network and can be highly complex. Hence, the wealth and financial well-being of a firm depends on the entire network and not just on the well-being of its immediate borrowers/\emph{debtors}.
For example, a possible bankruptcy of a firm and the corresponding damage to its immediate lenders/\emph{creditors} resulting from the firm's failure to repay (known as credit risk), can be propagated through the financial network by causing the creditors' (and other firms', in sequence) inability to repay their debts, thus having a global effect.

In this work, we examine the global effect of payment decisions of individual firms. We assume that each firm has a fixed amount of \emph{external assets} (not affected by the network) which are measured in the same currency as the liabilities. A firm's \emph{total assets} comprise its external assets and its incoming payments, and can be used for (outgoing) payments to its creditors. A firm's payment decision, for example, can 
specify the priority to be given to each of its debts or creditors.
An authority (such as the government or the court) is assumed to monitor the decisions of the firms in order to guarantee that they comply with general regulatory principles,  such as the absolute priority and the limited liability ones (see, e.g., \cite{Eisenberg01}). According to these principles, a firm can leave a liability unpaid only if its assets are not enough to fully repay its liabilities.
In particular, the \emph{absolute priority principle}  requires that all creditors must be paid off before a firm's stakeholders can split assets, and the \emph{limited liability principle} implies that a firm that does not have enough assets to pay its liabilities in full, has to spend all its remaining assets to pay its creditors. Surprisingly, not all decisions of the firms lead to valid solutions (or payment profiles)\footnote{An example of such a network appears in \cite{DefaultAmbiguity19} and assumes the presence of default costs and Credit Default Swaps (definitions appear in Section \ref{sec:preliminaries}).}, or there can be ambiguity in the overall solution despite the individual fixed payment decisions of the firms\footnote{Consider, for example, four firms A, B, C, and D, with zero external assets each, and the following liabilities: A owes $\ell$ coins to B and another $\ell$ coins to C; B owes $\ell$ coins to A and another $\ell$ coins to D. Assume that the payment strategy of firm A is to prioritize payments towards B over C, and that the payment strategy of firm B is to prioritize payments towards A over D (firms C and D are not strategic since they do not have multiple creditors---or any for that matter). Then there exist infinitely many valid solutions that are consistent with this strategy profile: Indeed, A paying B the amount of $\lambda$ coins and B paying A the same amount ($\lambda$ coins), is a valid solution for any positive value $\lambda\leq \ell$.} \cite{Eisenberg01}. 
This gives rise to the \emph{clearing problem} aiming to determine the final payment profile of the network for fixed individual payment decisions. Such \mk{payments are called \emph{clearing payments}} and are compatible with the given individual payment decisions.
Computing \mk{clearing payments} is a necessary step before performing a game-theoretic analysis.

We investigate the consequences of individual firm decisions in clearing from a game-theoretic standpoint \cite{nisan_roughgarden_tardos_vazirani_2007}.  
Following the recent work of Bertschinger et al. \cite{Hoefer19}, we diverge from the common assumption of proportionality in payments, that dictates that a firm pays its creditors proportionally to their respective liabilities.
We assume that firms strategize about their payments by appropriately \mk{deciding the priority of their payment actions}, consistently with a predefined payment scheme (and the current regulation)\mk{; the chosen priorities may have an effect on the incoming payments of a given firm}. We consider a natural payment scheme that allows  allocations of creditors to priority-classes independently of the available assets.
We refer to this payment scheme  as priority-proportional scheme, and the corresponding strategies as \emph{priority-proportional  strategies}.

Our paper is the first to perform a game-theoretic analysis of priority-proportional payments in financial networks. Payments with priorities are simple to express as well as quite common and very well-motivated in the financial world. Indeed, bankruptcy codes allow for assigning priorities to the payout to different creditors, in case an entity is not able to repay all its obligations. Such a distribution of payments can be part of a reorganization plan ordered by the court \cite{Priorities}.
Priority classes in bankruptcy law have also been considered in \cite{Elsinger09,Kaminski00}, among others. Moving beyond regulated financial contexts, similar behavior is common in everyday transactions between individuals with pairwise debt relations. This is the first time that the impact of priority-proportional payments is assessed in a strategic setting, as are other elements of our analysis, even though such payments have been considered in the past (most recently by Papp and Wattenhofer \cite{wattenhofer2020network}). Our work demonstrates that, despite their simplicity, priority-proportional strategies exhibit certain desired properties with respect e.g., to equilibria existence and quality. They are also not very restrictive, which can lead to unattractive instances.
Our work can be seen as an attempt to examine whether such priority-based payment plans are indeed stable equilibria; this can be useful from a mechanism design perspective. 

Our game-theoretic analysis considers two different definitions of utility motivated by the financial literature, namely \emph{total assets}, computed as the sum of external assets and incoming payments, or \emph{equity}, respectively. Traditionally, the financial health of a firm has been measured by its equity, which is equal to the amount of remaining assets after payments (total assets minus liabilities) if this is positive, and $0$ otherwise. This means that all firms that have more debt than assets have equity $0$, so equities fail to capture the potentially different available assets these firms might have. Total assets can be seen as a refinement of the equities: indeed, for any financially healthy firm (that can repay all its debt) its total assets equal its equity plus its liabilities (a fixed term), while for other firms,
total assets allow them to distinguish among different states that are indistinguishable in terms
of equity. Note that, since by the absolute priority principle a firm is obliged by law to fully repay all its creditors if it has enough assets, it is only the payment decisions of firms whose debt is more than their assets that can have an effect on the network. For this reason, computing the firms' utility as their total assets is a suitable approach in a game-theoretic
context. Overall, both definitions are aligned with the individual financial wealth and welfare of a firm, so maximizing the utility is a firm's natural individual objective. 

Our model is based on the seminal and widely adopted work of Eisenberg and Noe \cite{Eisenberg01} who provide a basic financial network model, i.e., proportional payments, non-negative external assets,  and debt-only liabilities.  In addition to considering a different payment scheme, we enhance  the basic model by also considering financial features commonly arising in practice, such as  default costs \cite{rogers2013failure}, Credit Default Swap (CDS) contracts \cite{DefaultAmbiguity19}, and negative external assets \cite{Dem18} (definitions appear in Section \ref{sec:preliminaries}).  
We analyze the efficiency of the states arising from various \mk{clearing payments of financial networks, with a focus on ones consistent with equilibrium strategies}.  Note that even though, as stated above, both utility definitions are aligned with the individual financial wealth and welfare of a firm, it turns out that these notions of individual utility are not always aligned with the welfare of the whole financial system.

 Overall, we aim to quantify the extent to which strategic behavior of the firms affects the welfare of the society, by analyzing the \emph{financial network games} that are defined by a particular utility function, and possibly allow the presence of 
other financial instruments. In particular, we consider financial network games under priority-proportional strategies, defined for different utility functions, such as total assets or equities,  and which potentially allow CDS contracts, default costs, or negative external assets.  We derive structural results that have to do with the existence, the computation, and the properties of \mk{clearing payments for fixed payment decisions in a non-strategic setting, and/or the existence and quality of equilibrium strategies}. In particular, in Section \ref{sec:non-strategic} we prove the existence of  \mk{maximal clearing payments} under priority-proportional strategies, even in the presence of default costs, and provide an algorithm that computes \mk{them} efficiently. We are then able to prove existence of equilibria when the utility is defined as the equity, but show that equilibria are not guaranteed to exist when the utility is captured by the total assets. We then turn our attention to the efficiency of  equilibria and provide an almost complete picture of the price of anarchy \cite{Koutsoupias:2009} and the price of stability \cite{Schulz-pos03,Anshelevich-pos08}. Our results for total assets appear in Section \ref{sec:assets}, while the case of equities is treated in Section \ref{sec:equities}.

\paragraph{Related work}
Financial networks and their related properties have been analyzed in various works that follow the standard (non-strategic) model developed by Eisenberg and Noe \cite{Eisenberg01}. They introduce a financial network model allowing debt-only contracts among firms with non-negative external assets that make proportional payments. Among other results, they prove that there always exist maximum and minimum clearing \mk{ payments} and  identify sufficient conditions so that  uniqueness is guaranteed; they also present an efficient iterative algorithm that computes them.

Following the model of Eisenberg and Noe \cite{Eisenberg01}, a series of papers extend and enrich the model by adding default costs \cite{rogers2013failure}, cross-ownership \cite{elliott2014financial,vitali2011network}, liabilities with various maturities \cite{allouch2017strategic,kusnetsov2019interbank} and CDS contracts \cite{wattenhofer2020network,DefaultAmbiguity19}. In \cite{rogers2013failure}, Rogers and Veraart prove the existence of maximal clearing \mk{payments} in the presence of default costs and provide an algorithm that computes them. Schuldenzucker et al. \cite{DefaultAmbiguity19}  also consider CDS contracts and show that, in general, there can be zero or many clearing \mk{payments}. Furthermore, they provide sufficient conditions for the existence of unique clearing \mk{payments}. Demange \cite{Dem18} proposes a model that allows for firms to be indebted to entities outside the network, and captures this by allowing negative external assets.

The strategic aspects of financial networks have been considered very recently. Most relevant to our setting is the paper by Bertschinger et al. \cite{Hoefer19} who also study the inefficiency of equilibria in financial networks. They follow the standard model of \cite{Eisenberg01} and focus mainly on two payment schemes, namely coin-ranking and edge-ranking strategies, while also considering the total assets (as opposed to equity) as a measure of the individual utility of each firm. Apart from defining the graph-theoretic version of the clearing problem, they present a large range of results on the existence and quality of equilibria. Our work extends this line of research by considering a different payment scheme \mk{(which extends edge-ranking strategies by allowing ties in the ranking)} and the possibility of additional common financial features (default costs, CDS contracts, and negative external assets). \mk{In a similar spirit, Papp and Wattenhofer \cite{wattenhofer2020network} consider} the impact of individual firm actions, such as removing an incoming debt, donating extra funds to another firm, or investing more external assets, when CDS contracts are also allowed. They mostly focus on the case where firms have predefined priorities over their creditors and they remark that even by redefining such priorities, a firm cannot affect its equity.

Schuldenzucker and Seuken \cite{SS20} consider the problem of portfolio compression, where a set of liabilities forming a directed cycle in the financial network may be simultaneously removed, if all participating firms approve it. They consider questions related to the firms' incentives to participate in such a compression, while Schuldenzucker et al. \cite{SSB17} show that finding clearing payments when CDS contracts are allowed is PPAD-complete.  Additional strategic considerations, albeit less related to our setting, are the focus of Allouch and Jalloul  \cite{allouch2017strategic} who consider liabilities with two different maturity dates and study how firms may strategically deposit some amount from their first-period endowment in order to increase their assets in the second period, while Cs\'{o}ka and Herings \cite{CH19} consider liability games and study how to distribute the assets of a firm in default, among its creditors and the firm itself.

\section{Preliminaries}\label{sec:preliminaries}
In the following, we denote by $[n]$ the set of integers $\{1, \dots, n\}$. The critical notions of this section and their graphical representation are presented in an example (Figure \ref{fig:example}), at the end of this Section.

\paragraph{Financial networks.} Consider a set $\mathcal{N}=\{v_1, \dots, v_n\}$ of $n$ firms, where each firm $v_i$ initially has some \emph{external assets} $e_i$ corresponding to income received from entities outside the financial system;  note that $e_i$ may also be negative. 

Firms have payment obligations, i.e., \emph{liabilities}, among themselves. These are in the form of a simple debt contract or of a Credit Default Swap. A \emph{debt contract} creates a liability $l^0_{ij}$ of firm $v_i$ (the debtor) to firm $v_j$ (the creditor); we assume that $ l_{ij}^0 \geq 0$ and $l_{ii}^0=0$. Note that both $l_{ij}^0>0$ and $l_{ji}^0>0$ may hold. Firms with sufficient funds to pay their obligations in full are called \emph{solvent} firms, while ones that cannot are \emph{in default}. The \emph{recovery rate}\mk{, $r_k$, of a firm $v_k$ that is in default, is defined as} the fraction of its total liabilities that it can fulfill.  A \emph{Credit Default Swap} (CDS) is a conditional liability of $(1-r_k)l_{ij}^k$ of the debtor $v_i$ to the creditor $v_j$, subject to the default of $v_k$, called the \emph{reference entity}.  
Overall, the total liability of firm \mk{$v_i$ to firm $v_j$} is $l_{ij} = l_{ij}^0+\sum_{k \in [n]} {(1-r_k)l_{ij}^{k} }.$
Let $L_i=\sum_j {l_{ij}}$ be the total liabilities of firm $v_i$ and set $\mathbf{l}=(L_1, \dots,L_n)$. 

Let $p_{ij}$ denote the payment from $v_i$ to $v_j$; we assume that $p_{ii} =0$. These payments define a payment matrix $\mathbf{P} = (p_{ij})$ with $i,j \in [n]$. 
Then, $p_i = \sum_{j \in [n]}{p_{ij}}$ represents the total outgoing payments of firm $v_i$, while $\mathbf{p} = (p_1, \dots, p_n)$ is the payment vector\mk{; this should not be confused with the breakdown of individual payments of firm $v_i$ that is denoted by $\p_i=(p_{i1}, \dots, p_{in})$}.
A firm in default may need to liquidate its external assets or make payments to entities outside the financial system (e.g., to pay wages). This is modeled using \emph{default costs} defined by values $\alpha, \beta \in [0,1]$. A firm in default \mk{can only} use an $\alpha$ fraction of its external assets (when this is positive) and a $\beta$ fraction of its incoming payments.  The absolute priority and limited liability regulatory principles, discussed in the introduction, imply that a solvent firm must repay all its obligations to all its creditors, while a firm in default must repay as much of its debt as possible, taking default costs also into account. Summarizing, it must hold that $p_{ij}\leq l_{ij}$ and, furthermore, 
$\mathbf{P}=\Phi(\mathbf{P})$, where
\begin{equation}\label{eq:fixed-point}
\Phi(\mathbf{x})_i= \left\{
\begin{array}{ll}
L_i, & \textrm{if } L_i\leq e_i+\sum_{j=1}^n x_{ji}\\
\alpha e_i+\beta \sum_{j=1}^n x_{ji}, & \textrm{otherwise}.\\
\end{array} \right.
\end{equation}

\mk{Payments $\mathbf{P}$ that satisfy these constraints are called \emph{clearing payments}}\footnote{\mk{Clearing payments are not necessarily unique.}}. \mk{We define the notion of \emph{proper} clearing payments, which are clearing payments that satisfy that all the money circulating in the financial network have originated from some firm with positive external assets. In the following, we only consider proper clearing payments.} 

\paragraph{Financial network games.} These games arise naturally when we view the firms as strategic agents. We denote the strategy of firm $v_i$ by $s_i(\cdot)$; this dictates how $v_i$ allocates its existing funds, for any possible value these might have. 
Similarly, we can define the strategy profile $\mathbf{s} = (s_1(\cdot), \dots, s_n(\cdot))$. 

Given clearing payments $\mathbf{P}$, we define firm $v_i$'s utility using either of the following two notions. The \emph{total assets} $a_i(\mathbf{P})$ (see also \cite{Hoefer19,CH18}) are $$a_i(\mathbf{P})=e_i+\sum_{j\in [n]} p_{ji},$$
while the \emph{equity} $E_i(\mathbf{P})$ (see also \cite{wattenhofer2020network}) is $$ E_i(\mathbf{P})=\max\{0, a_i(\pp)-L_i\}.$$ 




\emph{Proportional payments} have been frequently studied in the financial literature (e.g., in \cite{Dem18,Eisenberg01,rogers2013failure}). Given clearing payments $\mathbf{P}$ \mk{(see Equation (\ref{eq:fixed-point}))}, each $p_{ij}$ must also satisfy $p_{ij} = \min\{l_{ij}, (e_i+\sum_{k\in[n]}{p_{ki}})\frac{l_{ij}}{L_i}\}. $
Note that, when constrained to use proportional payments, there is no strategic decision making involved.

We focus on \emph{priority-proportional payments}, where a firm's strategy is independent of its total assets and consists of a complete ordering of its creditors allowing for ties. Creditors of higher priority must be fully repayed before any payments are made towards creditors of lower priority, while creditors of equal priority are treated as in proportional payments. For example, a firm $v_i$ having firms $a,b,c,$ and $d$ as creditors may select strategy $s_i = (a,b|c|d)$ that has firms $a,b$ in the top priority class, followed by $c$ and, finally, with $d$ in the lowest priority class. 	

Let $L_i^{(m)}$ denote the total liability of firm $v_i$ to firms in its $m$-th priority class. We use parameters $k_{ij}$ to imply that firm $v_j$ is in the $k_{ij}$-th priority class of firm $v_i$ and denote by $\pi_{ij}'$ the relative liability of $v_i$ towards $v_j$ in the corresponding priority class, i.e.,  $\pi_{ij}'=\frac{l_{ij}}{L_{i}^{(k_{ij})}}$ if $L_{i}^{(k_{ij})}>0$, and $\pi_{ij}'=0$ if $L_{i}^{(k_{ij})}=0$. \mk{For given priority-proportional strategies for all firms, the clearing payments $\mathbf{P}$ (see Equation (\ref{eq:fixed-point}))} must also satisfy 
\begin{equation}\label{eq:pp-payments}
p_{ij} =\min \left\{ \max \left\{ \left(p_i-\sum_{m=1}^{k_{ij}-1} {L_i^{(m)}} \right)\cdot \pi_{ij}', 0 \right\} , l_{ij} \right\}. 
\end{equation}
That is, the payment of $v_i$ to a creditor in priority class $L_i^{(k_{ij})}$ occurs only after all payments to creditors of higher priority have been guaranteed. Then, payments to creditors in $L_i^{(k_{ij})}$ are made proportionally to their claims in that priority class. Finally, we also have $0\leq p_{ij}\leq l_{ij}$.

\mk{We will now define the notion of \emph{Nash equilibrium} in a financial network game. First, let us stress that a strategy profile has consistent clearing payments which are not necessarily unique. It is standard practice (see, e.g. \cite{Eisenberg01,rogers2013failure}) to focus attention to maximal clearing payments (such payments point-wise maximize all corresponding payments) to avoid this ambiguity. So, we say that a strategy profile $\mathbf{s}$ is a \emph{Nash equilibrium} if no firm can increase her utility by deviating to another payment strategy. We only consider pure Nash equilibria and clarify that the utility of a firm for a given strategy profile is computed based on the assumption that the maximal clearing payments will be realized every time. In Section \ref{sec:non-strategic}, we show how we can compute such payments efficiently.}
 Extending strategy deviations to coalitions and joint deviations, we are interested in \emph{strong equilibria} where no coalition can cooperatively deviate so that all coalition members obtain strictly greater utility. 

\paragraph{Social welfare:}
Given clearing payments $\mathbf{P}$, the social welfare $\sw(\mathbf{P})$ is the sum of the firm utilities; the particular utility notion (total assets or equities) will be clear from the context.
The optimal social welfare is denoted by $\opt$.
Let $\mathbf{P_{eq}}$ be the set of  \tc{clearing payments consistent with (pure) Nash equilibrium strategy profiles}. The price of anarchy (PoA) of a particular instance  is defined as the worst-case ratio of the optimal social welfare over the social welfare achieved at any equilibrium at the instance, $\poa=\max_{\pp \in \mathbf{P_{eq}}}\frac{\opt}{\sw(\mathbf{P})}.$
 In contrast, the price of stability (PoS) of a given instance of a game measures how far the highest social welfare that can be achieved at equilibrium is from the optimal social welfare, i.e., $\pos=\min_{\pp \in \mathbf{\pp_{eq}}}\frac{\opt}{\sw(\mathbf{P})}.$
The \emph{Price of Anarchy (Price of Stability, respectively)} of a game is the maximum PoA (PoS, respectively) of any instance of the given game.

\begin{example}\label{example}
We represent a financial network by a graph as follows. Nodes correspond to firms and black edges correspond to debt-liabilities; a directed edge from node $v_i$ to node $v_j$ with label $l_{ij}^0$ implies that firm $v_i$ owes firm $v_j$ an amount of money equal to $l_{ij}^0$. Nodes are also labeled, their label appears in a rectangle and denotes their external assets; we omit these labels for firms with external assets equal to $0$. A pair of red edges (one solid and one dotted) represents a CDS contract: a solid directed edge from node $v_i$ to node $v_j$ with label $l_{ij}^k$ and a dotted undirected edge connecting this edge with a third node $v_k$ implies that firm $v_i$ owes $v_j$ an amount of money equal to $(1-r_k)l_{ij}^k$. 

Figure \ref{fig:example} depicts a financial network with five firms having external assets $e_1=e_4=1$ and $e_2=e_3=e_5=0$. There exist four debt contracts, i.e., firm $v_1$ owes $v_2 $ and $v_3$ two coins and one coin, respectively; $v_2$ owes $v_1$ one coin, and $v_3$ owes $v_5$ one coin. There is also a CDS contract between $v_4$, $v_5$, and $v_3$, with nominal liability $l_{45}^3=1$, with $v_4$ being the debtor, $v_5$ the creditor, and $v_3$ the reference entity. $v_4$ will only need to pay $v_5$ if $v_3$ is in default; the amount owed would be equal to $1(1-r_3)$. 

\begin{figure}[h]
	\centering
	\includegraphics[width=0.38\textwidth, height=2.5cm]{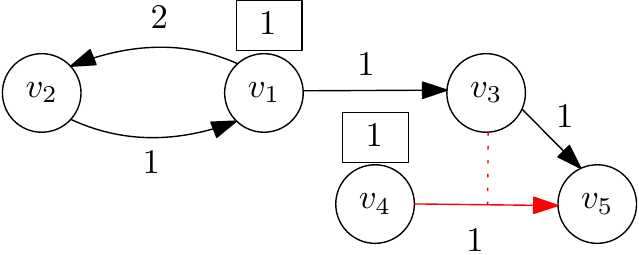}
	\caption{An example of a financial network.}
	\label{fig:example}
\end{figure}

In this network, only $v_1$ can strategize about its payments, hence we focus just on $v_1$'s strategy:
\begin{itemize}
\item Let $v_1$ select the priority-proportional strategy $s_1 = (v_2|v_3)$. Then, the payment vector would be $\mathbf{p}=(2, 1, 0, 1, 0)$ with $\p_1 = (0, 2, 0, 0, 0)$. Note that this is valid since $r_3=\frac{p_3}{L_3}=\frac{p_{35}}{l_{35}}=\frac{0}{1}=0$ and $p_{45}=(1-r_3)l^{3}_{45}=1$.  The total assets of the firms are $a_1(\mathbf{P})=2$, $a_2(\mathbf{P})=2$, $a_3(\mathbf{P})=0$, $a_4(\mathbf{P})=1$ and $a_5(\mathbf{P})=1$, and the social welfare is  $SW(\mathbf{P})=6$.
\item If $v_1$'s strategy is $s'_1 = (v_3|v_2)$, the only consistent payment vector would be $\mathbf{p'}=(1, 0, 1, 0, 0)$ with $\p'_1 = (0, 0, 1, 0, 0)$. Then, $a_1(\mathbf{P}')=1$, $a_2(\mathbf{P}')=0$, $a_3(\mathbf{P}')=1$, $a_4(\mathbf{P}')=1$ and $a_5(\mathbf{P}')=1$, resulting in $SW(\mathbf{p}')=4$.
\item If $v_1$ decides to pay proportionally, that is $s''_1=(v_2, v_3)$, the payment vector would be $\mathbf{p''}=(2, 1, 2/3, 1/3, 0)$ with $\p''_1 = (0, 4/3, 2/3, 0, 0)$; note that $p''_{45}=l^3_{45}(1-r_3)=1(1-\frac{p_{35}}{l_{35}})=1/3$. Then, $a_1(\mathbf{P}'')=2$, $a_2(\mathbf{P}'')=4/3$, $a_3(\mathbf{P}'')=2/3$, $a_4(\mathbf{P}'')=1$ and $a_5(\mathbf{P}'')=1$, resulting in $SW(\mathbf{P}'')=6$. 
\end{itemize}
Comparing the total assets under the different strategies discussed above, $v_1$ would select either strategy $(v_2|v_3)$ or   $(v_2, v_3)$, returning the maximum possible total assets (i.e., $2$). Therefore, any strategy profile $\s$ where either $s_1 = (v_2|v_3)$ or $s_1=(v_2, v_3)$ is a Nash equilibrium. 
\end{example}

\section{Existence and computation of clearing payments}\label{sec:non-strategic}

This section contains our results relating to the existence and the properties of \mk{(proper) clearing payments} in priority-proportional games.
We begin by arguing that, given a strategy profile, \mk{clearing payments} always exist even in presence of default costs. Furthermore, in case there are multiple clearing \mk{payments}, there exist maximal payments, i.e., ones that point-wise maximize all corresponding payments, and we provide a polynomial-time algorithm that computes them.  Note that this result is in a non-strategic context, but it is necessary in order to perform our game-theoretic analysis as it allows us to argue about well-defined deviations by considering clearing  \mk{payments} consistently among different strategy profiles.

\begin{lemma}\label{lem:max_cv_existence}
In priority-proportional games with default costs, there always exist \mk{maximal clearing payments} under a given strategy profile.
\end{lemma}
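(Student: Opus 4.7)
The plan is to recast the clearing problem as a fixed-point problem on a complete lattice and then invoke Knaster--Tarski. Fix a strategy profile $\mathbf{s}$ and consider the lattice $\mathcal{L} = \prod_{i,j\in[n]} [0, l_{ij}]$ of payment matrices ordered componentwise; this is a complete lattice since it is a product of compact intervals. Define an operator $\Psi:\mathcal{L}\to\mathcal{L}$ in two steps: given $\mathbf{x}\in\mathcal{L}$, first compute for each firm $v_i$ the total outgoing payment $p_i = \Phi(\mathbf{x})_i$ according to Equation (\ref{eq:fixed-point}), and then split $p_i$ among $v_i$'s creditors by the priority-proportional rule of Equation (\ref{eq:pp-payments}). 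By construction, $\Psi$ is well-defined into $\mathcal{L}$ (the inner $\max$ and outer $\min$ in (\ref{eq:pp-payments}) enforce $0\le \Psi(\mathbf{x})_{ij}\le l_{ij}$), and a clearing payment matrix is precisely a fixed point of $\Psi$.

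The central step is to verify that $\Psi$ is monotone on $\mathcal{L}$. This splits into two claims: (a) the total payment $\Phi(\mathbf{x})_i$ is non-decreasing in the incoming payments $\sum_j x_{ji}$, and (b) each $p_{ij}$ in (\ref{eq:pp-payments}) is non-decreasing in $p_i$. Claim (b) is immediate from the functional form of the priority-proportional split, where the quantity $p_i - \sum_{m<k_{ij}} L_i^{(m)}$ is the only $p_i$-dependent term and it is scaled by a non-negative factor $\pi'_{ij}$. Claim (a) is the only place where the proof is non-trivial, because of the jump at the solvent/default boundary induced by the default-cost parameters $\alpha,\beta\in[0,1]$; one needs to check that at the transition the value jumps upward, i.e.\ $\alpha e_i + \beta(L_i-e_i)\le L_i$, which follows directly from $\alpha,\beta\le 1$ and $e_i\le L_i$ at the boundary. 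Combining (a) and (b) gives monotonicity of $\Psi$.

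With monotonicity in hand, Knaster--Tarski yields that the set of fixed points of $\Psi$ is a non-empty complete lattice and in particular admits a greatest element $\mathbf{P}^*$, which is the desired maximal clearing payment. Equivalently, starting from the top element $p_{ij}^{(0)} = l_{ij}$ and iterating $\mathbf{P}^{(k+1)}=\Psi(\mathbf{P}^{(k)})$ produces a componentwise non-increasing sequence in $\mathcal{L}$; maximality then follows because, for any other clearing payment $\mathbf{P}'$, the relation $\mathbf{P}'\le \mathbf{P}^{(0)}$ together with monotonicity gives $\mathbf{P}'=\Psi^k(\mathbf{P}')\le \mathbf{P}^{(k)}$ for every $k$, hence $\mathbf{P}'\le \mathbf{P}^*$.

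The main obstacle is precisely this default-cost discontinuity: naive continuity-based fixed-point theorems (Brouwer, Banach) do not apply, which is why I route the argument through order-theoretic fixed-point theory instead. A secondary subtlety is arguing that the iteration truly stabilizes at a fixed point rather than at a jump; here the key observation is that along the monotone iterates the incoming-payment sequence at each firm is monotone, so each firm's solvent/default status switches at most once along the trajectory, and the limit therefore satisfies $\Psi(\mathbf{P}^*)=\mathbf{P}^*$. Proper-ness is preserved along the descending iteration because any circulation not traceable to external assets is squeezed out by the default-cost factors (or is ruled out by the proper-ness restriction built into the class of clearing payments under consideration), so $\mathbf{P}^*$ qualifies as a maximal clearing payment in the sense used throughout the paper.
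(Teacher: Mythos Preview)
Your core argument---casting the problem as a monotone operator on the complete lattice $\prod_{i,j}[0,l_{ij}]$ and invoking Knaster--Tarski---is exactly the route the paper takes; you are in fact more explicit than the paper about why the composite map $\Psi$ (total outgoing payment via (\ref{eq:fixed-point}) followed by the priority-proportional split (\ref{eq:pp-payments})) is monotone, including the check that the solvent/default discontinuity is an upward jump.

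There is, however, a real gap in your last paragraph on \emph{proper} clearing payments. The greatest fixed point $\mathbf{P}^*$ you obtain need not be proper: when $\beta=1$, a cycle of firms with zero external assets can sustain a circulating flow as a fixed point (the footnoted four-firm example in the introduction is precisely this phenomenon), so nothing is ``squeezed out by the default-cost factors'' along the descending iteration. Your parenthetical alternative---building the proper-ness restriction into $\mathcal{L}$---is asserted rather than argued: you would have to verify that proper payment matrices form a complete lattice and that $\Psi$ maps proper matrices to proper matrices, and even then your iteration starts from $\mathbf{l}$, which is not proper. The paper closes this gap with a separate short argument after the Tarski step: from the maximal (possibly improper) fixed point $\pp$, delete every payment not reachable from a firm with positive external assets; observe that the resulting $\pp'$ is still a clearing payment (the deleted flows form closed circulations among firms all of whose outgoing payments are likewise deleted, so (\ref{eq:fixed-point}) and (\ref{eq:pp-payments}) are preserved); and then argue by contradiction that $\pp'$ dominates every proper clearing payment by comparing back to $\pp$. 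You need an argument of this kind to finish.
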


\begin{proof}
The proof follows by Tarski's fixed-point theorem, along similar lines to \cite{Eisenberg01,rogers2013failure}. We  first focus on the case of not necessarily proper payments. Indeed, the set of payments form a complete lattice. Any payment $p_{ij}$ is lower-bounded by $0$ and upper-bounded by $l_{ij}$ and for any two clearing payments $\pp$ and $\pp'$ such that $\pp \geq \pp'$ ($\pp$ is pointwise at least as big as $\pp'$) it holds that $\Phi(\pp)\geq \Phi(\pp')$, where $\Phi$ is defined in (\ref{eq:fixed-point}). Therefore, $\Phi(\cdot)$ has a greatest fixed-point and a least fixed-point.

\mk{We now claim that the existence of maximal clearing payments implies the existence of maximal proper clearing ones. Indeed, consider some maximal clearing payments $\pp$ and the resulting proper  payments $\pp'$ obtained by $\pp$ when ignoring all payments that do not originate from some firm with positive external assets. \tc{We note that $\pp'$ are clearing payments since the payments that  are deleted, in the first place only reached firms whose outgoing payments are decreased to zero as well.} Clearly, if $\pp'$ are not maximal proper clearing payments, then there exist proper clearing payments $\tilde{\pp}$ with $p'_{ij}<\tilde{p}_{ij}$ for some firms $v_i, v_j$. If $p'_{ij} = p_{ij}$ we obtain a contradiction to the maximality of $\pp$, otherwise if $p'_{ij}<p_{ij}$, then $\tilde{\pp}$ cannot be proper. }
\end{proof}


We now show how such maximal \mk{clearing payments} can be computed. Given a strategy profile in a priority-proportional game, Algorithm \ref{alg:MCV} below, that extends related algorithms in \cite{Eisenberg01,rogers2013failure}, computes the maximal (proper) clearing \mk{payments} in polynomial time. \mk{In particular, and since the strategy profile is fixed, we will argue about the payment vector consisting of the total outgoing payments for each firm; the detailed payments then follow by the strategy profile.}


\begin{algorithm}[h]
		\caption{\textsc{MCP}}\label{alg:MCV}
		\tcc{The algorithm assumes given strategies (priority classes). By abusing notation and for ease of exposition, we denote by $p_i^{(\kappa)}$ the \mk{total} outgoing payments of firm $v_i$ at round $\kappa$ and by $\p^{(\kappa)}$ the vector of \mk{total} outgoing payments at round $\kappa$. }
	Set $\mu=0, \mathbf{p}^{(-1)}=\mathbf{l}$ and $ \ds_{-1}=\emptyset$\;
	Compute $ E_i^{(\mu)}:=e_i+\displaystyle\sum_{j\in [n]} p^{(\mu-1)}_{ji}-L_i $,  for $i=1,\ldots, n$\; \label{alg:line2}
	$\ds_{\mu} =\left\{  v_i: E_i^{(\mu)} < 0 \right\} $\;

	\If{$\ds_{\mu} \neq \ds_{\mu-1}$}{
	Compute $\p^{(\mu)}$ that is consistent with Equation (\ref{eq:pp-payments}) and satisfies $p_{i}^{(\mu)}=\left\{
\begin{array}{ll}
\alpha e_i + \beta \left(\displaystyle\sum_{j \in \ds_{\mu}  } p^{(\mu)}_{ji}+ \displaystyle\sum_{ j \in \mathcal{N}\setminus \ds_{\mu} } l_{ji}  \right),& \forall v_i \in \ds_{\mu}\\
L_{i} & \forall v_i \in \mathcal{N}\setminus\ds_{\mu}.\\
\end{array} \right.$\;
	Set $\mu = \mu+1$\;
	\textbf{go to} Line \ref{alg:line2}\;}
\Else{Run \textsc{Proper}($\mathbf{p}^{(\mu-1)}$)}
\end{algorithm}

\begin{algorithm}[H]
\caption{\textsc{Proper($\mathbf{x}$)}}\label{alg:proper}
\tcc{The algorithm takes in input payments $\mathbf{x}$ and returns proper payments.}
Set \textsc{Marked} $=\{v_i: e_i>0\}$ and \textsc{Checked} $= \emptyset$\;
\While{\textsc{Marked} $\not=\emptyset$}{
Pick $v_i\in \textsc{Marked}$\; 
\For{$v_j \not\in$ \textsc{Marked} $\cup$ \textsc{Checked} with $x_{ij}>0$}
{\textsc{Marked} $=$ \textsc{Marked} $\cup \{v_j\}$\;}
\textsc{Marked} $=$ \textsc{Marked} $\setminus \{v_i\}$ and \textsc{Checked} $=$ \textsc{Checked} $\cup \{v_i\}$\;
}
\For{$v_i\notin$ \textsc{Checked}}{Set all outgoing payments from $v_i$ in $\textbf{x}$ to $0$\;}
Return $\mathbf{x}$
\end{algorithm}

\begin{lemma}\label{lem:induction}
The payment vectors computed in each round of Algorithm \ref{alg:MCV} are pointwise non-increasing, i.e., $\p^{(\mu)}\leq \p^{(\mu-1)}$ for any round $\mu\geq 0$.
\end{lemma}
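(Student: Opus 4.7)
The plan is to prove the claim by induction on $\mu$. The base case $\mu = 0$ is immediate since $\p^{(-1)} = \mathbf{l}$ and every payment $p_i^{(0)}$ is bounded by $L_i$. For the inductive step, assuming $\p^{(\mu)} \leq \p^{(\mu-1)}$, I would first establish that the defaulting set is monotone, i.e., $\ds_\mu \subseteq \ds_{\mu+1}$. This follows because (\ref{eq:pp-payments}) makes each bilateral payment $p_{ji}$ a non-decreasing function of the total outgoing payment $p_j$, so the inductive hypothesis yields $p_{ji}^{(\mu)} \leq p_{ji}^{(\mu-1)}$ componentwise and hence $E_i^{(\mu+1)} \leq E_i^{(\mu)}$, preserving default membership.

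To compare $\p^{(\mu+1)}$ with $\p^{(\mu)}$ componentwise, the case $v_i \notin \ds_{\mu+1}$ is immediate since both vectors equal $L_i$ on this coordinate. For $v_i \in \ds_{\mu+1}$ I would introduce the monotone operator $\Psi_{\mu+1}$ that encodes round $(\mu+1)$'s computation, namely $\Psi_{\mu+1}(\mathbf{q})_i = L_i$ for $v_i \notin \ds_{\mu+1}$ and $\Psi_{\mu+1}(\mathbf{q})_i = \alpha e_i + \beta \sum_j q_{ji}(\mathbf{q})$ otherwise, where $q_{ji}(\mathbf{q})$ denotes the priority-proportional payment determined by $q_j$ via (\ref{eq:pp-payments}). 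The algorithm's $\p^{(\mu+1)}$ is the maximal fixed point of $\Psi_{\mu+1}$. I would then verify that $\Psi_{\mu+1}(\p^{(\mu)}) \leq \p^{(\mu)}$: for $v_i \in \ds_\mu$ both sides reduce to $\alpha e_i + \beta \sum_j p_{ji}^{(\mu)}$ (using that any $v_j \in \ds_{\mu+1} \setminus \ds_\mu$ pays $L_j$ in full under $\p^{(\mu)}$, so $p_{ji}^{(\mu)} = l_{ji}$), while for $v_i \in \ds_{\mu+1} \setminus \ds_\mu$ the membership condition $e_i + \sum_j p_{ji}^{(\mu)} < L_i$ combined with $\alpha,\beta \in [0,1]$ yields $\Psi_{\mu+1}(\p^{(\mu)})_i \leq e_i + \sum_j p_{ji}^{(\mu)} < L_i = p_i^{(\mu)}$. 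By monotonicity, iterating $\Psi_{\mu+1}$ from $\p^{(\mu)}$ produces a non-increasing sequence converging to a fixed point of $\Psi_{\mu+1}$ at most $\p^{(\mu)}$.

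The main obstacle is concluding that the \emph{maximal} fixed point $\p^{(\mu+1)}$ is itself bounded above by $\p^{(\mu)}$, i.e., that no fixed point of $\Psi_{\mu+1}$ exceeds $\p^{(\mu)}$. I plan to prove this directly: given any fixed point $\mathbf{q}$ of $\Psi_{\mu+1}$, set $S = \{v_i \in \ds_\mu : q_i > p_i^{(\mu)}\}$; summing the fixed-point identities over $S$ and using monotonicity of priority-proportional payments gives $\sum_{i \in S}(q_i - p_i^{(\mu)}) \leq \beta \sum_{j \in S}(q_j - p_j^{(\mu)})$, which forces $S = \emptyset$ whenever $\beta < 1$. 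The boundary case $\beta = 1$ is handled by the properness restriction from Lemma~\ref{lem:max_cv_existence}, which rules out the circular flows that could otherwise sustain an extraneous fixed point above $\p^{(\mu)}$.
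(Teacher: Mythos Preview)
Your core argument---establishing $\ds_\mu\subseteq\ds_{\mu+1}$ from the inductive hypothesis, then verifying $\Psi_{\mu+1}(\p^{(\mu)})\le\p^{(\mu)}$ via the case split over $\ds_\mu$ and $\ds_{\mu+1}\setminus\ds_\mu$, and iterating downward---is exactly what the paper does. The paper starts the recursion for round $\mu+1$ at $\mathbf{x}^{(0)}=\p^{(\mu)}$, checks $x^{(1)}\le x^{(0)}$ by the same two cases, and \emph{takes $\p^{(\mu+1)}$ to be the limit of this decreasing sequence}. In other words, the paper reads Line~5 of the algorithm as ``iterate $\Psi_{\mu+1}$ downward from $\p^{(\mu)}$,'' so your ``main obstacle'' simply never arises: $\p^{(\mu+1)}\le\p^{(\mu)}$ holds by construction, and no separate comparison with an abstract maximal fixed point is required.

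Your attempt to close that obstacle has a genuine gap at $\beta=1$. The appeal to properness via Lemma~\ref{lem:max_cv_existence} does not apply to the intermediate vectors $\p^{(\mu)}$: the \textsc{Proper} subroutine is invoked only once, after the main loop terminates, and nothing in the round-by-round computation enforces properness. With $\beta=1$, a cycle among firms in $\ds_{\mu+1}$ with zero external assets supports a continuum of fixed points of $\Psi_{\mu+1}$, and your summing argument degenerates to $0\le 0$ without forcing $S=\emptyset$. The clean fix is to drop the ``maximal fixed point of $\Psi_{\mu+1}$ on $[0,\mathbf{l}]$'' reading and adopt the paper's: $\p^{(\mu+1)}$ is the limit of iterates of $\Psi_{\mu+1}$ started at $\p^{(\mu)}$. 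With that interpretation your proof and the paper's coincide, and the extra paragraph becomes unnecessary.
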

\begin{proof}
We prove the lemma by induction. The base of our induction is $\p^{(0)}\leq \p^{(-1)}=\mathbf{l}$. It holds that $p_i^{(0)}=L_i$ if $v_i\in \mathcal{N}\setminus \ds_{0}$, so it suffices to compute $p_i^{(0)}$ and show that $p_i^{(0)}\leq L_i$  for $v_i\in \ds_0$.

We wish to find the solution $\mathbf{x}$ to the following system of equations
\begin{eqnarray}\nonumber
&x_i=\alpha e_i + \beta \left( \sum_{j \in \ds_{0}} x_{ji}+ \sum_{ j \in \mathcal{N}\setminus \ds_{0} } l_{ji}   \right), &\qquad\forall  v_i \in \ds_{0}, \\\label{con:payment-in-default}
&x_i=L_i, &\qquad\forall  v_i \in \mathcal{N}\setminus \ds_{0}.
\end{eqnarray}
We compute $\mathbf{x}$ using a recursive method and starting from $\mathbf{x}^{(0)}=\mathbf{p}^{(-1)}=\mathbf{l}$. We define $x^{(k)}$, $k\geq 1$, recursively by
\begin{equation*}
x_i^{(k+1)}=\alpha e_i + \beta \left( \sum_{j \in \ds_{0}} x^{(k)}_{ji}+ \sum_{ j \in \mathcal{N}\setminus\ds_{0} } l_{ji}   \right).
\end{equation*}
Now for $v_i \in \ds_{0}$, we have

$$ x_i^{(1)} = \alpha e_i + \beta \left( \sum_{j \in \ds_{0}} x^{(0)}_{ji}+ \sum_{ j \in \mathcal{N}\setminus\ds_{0} } l_{ji}   \right)
 \leq e_i+ \sum_{j \in \ds_{0}} x^{(0)}_{ji}+ \sum_{ j \in \mathcal{N}\setminus\ds_{0} } l_{ji}
< L_i=x_i^{(0)},$$
where the first inequality holds because $\alpha, \beta \leq 1$, and the second inequality holds by our assumption that $v_i \in \ds_{0}$.
Hence, sequence $ \mathbf{x}^{(k)}$ is decreasing. Since the solution to Equation (\ref{con:payment-in-default}) is non-negative, $\mathbf{x}$ can be computed as $\mathbf{x}=\lim_{k \rightarrow \infty }\mathbf{x}^{(k)}$, which completes the base of our induction.

Now assume that $ \mathbf{p} ^{(\mu)} \leq \mathbf{p} ^{(\mu-1)}$ for some $\mu\geq 0$. We will prove that $\mathbf{p}^{(\mu+1)} \leq \mathbf{p}^{(\mu)}$.
Similarly to before, $p_i^{(\mu+1)}=p_i^{(\mu)}=L_i$ if $v_i\in \mathcal{N}\setminus \ds_{\mu+1}$, so it suffices to compute $p_i^{(\mu+1)}$ and show that $p_i^{(\mu+1)}\leq p_i^{(\mu)}$  for $v_i\in \ds_{\mu+1}$.

The desired $p_i^{(\mu+1)}$ is the solution $\mathbf{x}$ to the following system of equations

\begin{align}\label{mu-solution}
x_i=\left\{
\begin{array}{ll}
\alpha e_i + \beta \left( \displaystyle\sum_{j \in \ds_{\mu+1}} x_{ji}+ \displaystyle\sum_{ j \in \mathcal{N}\setminus\ds_{\mu+1} } l_{ji}   \right),& \qquad\forall  i \in \ds_{\mu+1},\\
L_i, &\qquad\forall  v_i \in \mathcal{N}\setminus \ds_{\mu+1}.
\end{array} \right.
\end{align}

We compute $\mathbf{x}$ recursively starting with $ \mathbf{x}^{(0)}=\mathbf{p}^{(\mu)}$. We define $x^{(k)}$, $k\geq 1$, recursively by
\begin{equation}
x_i^{(k+1)}=\alpha e_i + \beta \left( \sum_{j \in \ds_{\mu+1}} x^{(k)}_{ji}+ \sum_{ j \in \mathcal{N}\setminus\ds_{\mu+1} } l_{ji}   \right).
\label{mu-recrusive}
\end{equation}

For $v_i \in \ds_{\mu+1}$, we have
\begin{align*}
x_i^{(1)}&= \alpha e_i + \beta \left( \sum_{j \in \ds_{\mu+1}} x^{(0)}_{ji}+ \sum_{ j \in \mathcal{N}\setminus\ds_{\mu+1} } l_{ji}   \right)\\
& =\alpha e_i + \beta \left( \sum_{j \in \ds_{\mu+1}} p^{(\mu)}_{ji}+ \sum_{ j \in \mathcal{N}\setminus\ds_{\mu+1} } l_{ji}   \right)\\
&=\alpha e_i + \beta \left( \left( \sum_{j \in \ds_{\mu}} p^{(\mu)}_{ji}+ \sum_{j \in \ds_{\mu+1}\setminus\ds_{\mu}} p^{(\mu)}_{ji}\right)+\sum_{ j \in \mathcal{N}\setminus\ds_{\mu+1} } l_{ji}   \right)\\
&=\alpha e_i + \beta \left( \left( \sum_{j \in \ds_{\mu}} p^{(\mu)}_{ji}+ \sum_{j \in \ds_{\mu+1}\setminus\ds_{\mu}} l_{ji} \right)+\sum_{ j \in \mathcal{N}\setminus\ds_{\mu+1} } l_{ji}   \right)\\
&=\alpha e_i + \beta \left(  \sum_{j \in \ds_{\mu}} p^{(\mu)}_{ji}+ \sum_{ j \in \mathcal{N}\setminus\ds_{\mu} } l_{ji}   \right),
\end{align*}
where we note that our assumption $ \mathbf{p} ^{(\mu)} \leq \mathbf{p} ^{(\mu-1)}$ implies that $\ds_{\mu+1} \supseteq  \ds_{\mu}$.
Now we can split the set $\ds_{\mu+1}$ into  $\ds_{\mu}$ and $\ds_{\mu+1} \setminus \ds_{\mu}$. For $v_i \in  \ds_{\mu}$ we have
$x_i^{(1)}=\mathbf{p}^{(\mu)}=x_i^{(0)}$. For $v_i\in \ds_{\mu+1} \setminus \ds_{\mu}$ we have

\begin{align*}
x_i^{(1)}&=\alpha e_i + \beta \left(  \sum_{j \in \ds_{\mu}} p^{(\mu)}_{ji}+ \sum_{ j \in \mathcal{N}\setminus\ds_{\mu} } l_{ji}   \right)\\
&\leq e_i +  \sum_{j \in \ds_{\mu}} p^{(\mu)}_{ji}+ \sum_{ j \in \mathcal{N}\setminus\ds_{\mu} } l_{ji}  \\
&=e_i +  \sum_{j\in [n]}{p^{(\mu)}_{ji}} \\
&< L_i = {p}^{(\mu)}_i=x_i^{(0)},
\end{align*}
which implies that the sequence $x^{(k)}$ is decreasing. Since the solution to Equation (\ref{mu-solution}) is non-negative, $\mathbf{x}=\mathbf{p}^{\mu+1}$ can be computed as $\mathbf{x}=\lim_{k \rightarrow \infty }\mathbf{x}^{(k)}$, which completes our claim that $\p^{(\mu)}\leq \p^{(\mu-1)}$ for any round $\mu\geq 0$.
\end{proof}

We now present the main result of this section.
\begin{theorem}\label{thm:maxcv-algorithm}
Algorithm \ref{alg:MCV} computes the maximal clearing payments under priority-proportional strategies in polynomial time.
\end{theorem}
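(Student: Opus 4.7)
The plan is to verify three properties of Algorithm \ref{alg:MCV}: polynomial termination, correctness (the returned vector is a clearing payment), and maximality (it pointwise dominates every other clearing payment). Lemma \ref{lem:induction} provides most of the monotonicity needed; together with a Knaster-Tarski style argument, it yields the result.

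First, I would establish polynomial termination. By Lemma \ref{lem:induction} the vectors $\p^{(\mu)}$ are pointwise non-increasing in $\mu$, hence the equities $E_i^{(\mu)}$ are non-increasing as well, which means that once a firm enters the default set $\ds_\mu$ it never leaves. So $\ds_{-1} \subseteq \ds_0 \subseteq \ds_1 \subseteq \cdots \subseteq \mathcal{N}$, and the exit condition $\ds_\mu = \ds_{\mu-1}$ triggers after at most $n+1$ rounds. Within a round, once $\ds_\mu$ is fixed, Equation (\ref{eq:pp-payments}) expresses each individual payment $p_{ij}$ as a piecewise-linear function of the firm total $p_i$, and $p_i$ itself is determined for $v_i \in \ds_\mu$ by the linear system in Equation (\ref{con:payment-in-default}); when $\beta<1$ the recursive scheme of Lemma \ref{lem:induction} converges geometrically, and when $\beta=1$ the system is linear and can be solved directly in polynomial time via Gaussian elimination. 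The \textsc{Proper} subroutine is a BFS from positive-external-asset firms and runs in $O(n^2)$ time.

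Next, I would verify correctness. At termination the loop condition fails, so $\ds_\mu$ coincides with the default set induced by $\p^{(\mu-1)}$; combined with its construction from Equations (\ref{con:payment-in-default}) and (\ref{eq:pp-payments}), this yields that $\p^{(\mu-1)}$ satisfies both Equation (\ref{eq:fixed-point}) and Equation (\ref{eq:pp-payments}), i.e., it is a clearing payment.

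For maximality I would induct on $\mu$ to show that $\p^{(\mu)} \geq \p^*$ for every clearing payment $\p^*$. The base case $\p^{(-1)} = \mathbf{l} \geq \p^*$ follows from $p^*_{ij} \leq l_{ij}$. For the step, if $\p^{(\mu)} \geq \p^*$, then for any $v_i \in \ds_{\mu+1}$ the bound $e_i + \sum_j p^*_{ji} \leq e_i + \sum_j p^{(\mu)}_{ji} < L_i$ places $v_i$ in default under $\p^*$ as well, hence $p^*_i = \alpha e_i + \beta \sum_j p^*_{ji}$. Applying the within-round recursion $\mathbf{x}^{(k)}$ of Lemma \ref{lem:induction} starting from $\mathbf{x}^{(0)} = \p^{(\mu)} \geq \p^*$ and using the monotonicity of the defining recursion (each step is a monotone, positively-weighted combination of the current payments and of liabilities that upper-bound the true payments of firms in $\mathcal{N}\setminus\ds_{\mu+1}$) gives $\mathbf{x}^{(k)} \geq \p^*$ for all $k$, and therefore $\p^{(\mu+1)} = \lim_k \mathbf{x}^{(k)} \geq \p^*$ on $\ds_{\mu+1}$. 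For $v_i \notin \ds_{\mu+1}$ we have $p_i^{(\mu+1)} = L_i \geq p_i^*$ trivially. Combined with correctness, this proves that the output $\p^{(\mu-1)}$ is the maximal clearing payment, and the \textsc{Proper} subroutine then produces the maximal proper clearing payment by the same post-processing argument used in the proof of Lemma \ref{lem:max_cv_existence}.

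The main obstacle is the maximality step: the algorithm does not directly iterate the operator $\Phi$ of Equation (\ref{eq:fixed-point}), but instead solves a sub-fixed-point within each round under a frozen default regime, so the Knaster-Tarski flavoured argument has to be carried out twice---inside every round via the recursion $\mathbf{x}^{(k)}$, and across rounds via induction on $\mu$. Care is also needed when $\beta = 1$ to ensure each round still terminates in polynomial time via a direct linear-algebra solution, and to confirm that the \textsc{Proper} post-processing preserves maximality within the class of proper clearing payments.
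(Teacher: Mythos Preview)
Your proposal is correct and follows essentially the same approach as the paper: nested default sets (via Lemma \ref{lem:induction}) give at most $n$ rounds, and an outer induction on $\mu$ combined with the inner within-round recursion $\mathbf{x}^{(k)}$ shows $\p^{(\mu)}\geq \p^*$ for every clearing payment $\p^*$, after which \textsc{Proper} extracts the maximal proper clearing payment as in Lemma \ref{lem:max_cv_existence}. Your treatment is in fact slightly more explicit than the paper's on two points---the need for a direct linear solve when $\beta=1$ (where the within-round recursion is not contractive), and the verification that the terminal vector actually satisfies Equations (\ref{eq:fixed-point}) and (\ref{eq:pp-payments})---but the skeleton of the argument is identical.
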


\begin{proof}
The algorithm proceeds in rounds. In each round $\mu$, tentative vectors of payments, $\p^{(\mu)}=(p_1^{(\mu)},\ldots,p_n^{(\mu)})$, and effective equities, $E^{(\mu)}$,  are computed. At the beginning of round $0$, all firms are marked as tentatively solvent, which we denote by $\ds_{-1}=\emptyset$; $\ds_{\mu}$  is used to denote the firms in default after the $\mu$-th round of the algorithm. The algorithm works so that once a firm is in default in some round, then it remains in default until the termination of the algorithm. Indeed, by Lemma \ref{lem:induction} the vectors of payments are non-increasing between rounds and the strategies are fixed. \mk{Algorithm \textsc{Proper} is called when $\ds_{\mu}=\ds_{\mu-1}$, which requires at most $n$ rounds; clearly, each round requires polynomial time. The running time of \textsc{Proper} is also polynomial. Indeed, note that each firm can enter set \textsc{Marked} at most once and will leave \textsc{Marked} to join set \textsc{Checked}} \mk{after each other firm is examined at most once.} 

Regarding the correctness of the Algorithm, we start by proving by induction that the \mk{payment vector} provided as input to Algorithm \ref{alg:proper} is at least equal (pointwise) to the maximal clearing vector $\mathbf{p^{*}}$. As a  base of our induction, it is easy to see that $\mathbf{p}^{(-1)}=\mathbf{l} \geq \mathbf{p}^*$. Now assume that $\mathbf{p}^{(\mu-1)} \geq \mathbf{p}^*$ for some $\mu\geq 0$; we will prove that $\mathbf{p}^{(\mu)} \geq \mathbf{p}^*$. We denote by $\ds_{*}$ the firms in default under the maximal clearing vector $\mathbf{p}^{*}$, i.e., $
\ds_{*}=\left \{v_i:e_i+\sum_{j\in[n]}{ p^{*}_{ji}} <L_{i} \right\}$. Our inductive hypothesis $\mathbf{p}^{(\mu-1)} \geq \mathbf{p}^*$ implies $\ds_{\mu} \subseteq \ds_{*}$. Hence, for firms $v_i \in \mathcal{N}\setminus\ds_{\mu}$, we have $\mathbf{p}_i^{(\mu)}=L_i \geq p_i^*$. For  $v_i \in \ds_{\mu}$ we refer to the proof of Lemma \ref{lem:induction} above and consider Equation (\ref{mu-recrusive}) again while starting the recursive solution with $x_i^{(0)}={p}_i^{(\mu)}$. For $v_i\in \ds_{\mu}$, we observe
\begin{align*}
x_i^{(1)}&=\alpha e_i + \beta \left( \sum_{j \in \ds_{\mu}} p^{(\mu)}_{ji}+ \sum_{ j \in \mathcal{N}\setminus\ds_{\mu} } l_{ji}   \right)\\
&\geq \alpha \cdot e_i + \beta \cdot \left(  \sum_{j \in \ds_{\mu}}  p^*_{ji}+ \sum_{ j \in \mathcal{N}\setminus\ds_{\mu} } l_{ji}   \right)\\
&\geq \alpha e_i+ \beta \left( \sum_{j\in[n]}{p^*_{ji}}\right)\\
&=L_i^{(0)}.
\end{align*}
Recursion (\ref{mu-recrusive}) then implies that $x^{(k)} \geq \p^{*}$ for all $k$ and hence $\mathbf{p}^{(\mu+1)}=\mathbf{x}=\lim_{k \rightarrow \infty}\mathbf{x}^{k} \geq \mathbf{p}^{*}$.

We have now proved that the input to Algorithm \ref{alg:proper} is at least equal (pointwise) to the desired maximal clearing vector $\mathbf{p^{*}}$. However, by Lemma \ref{lem:induction} we also know that $\mathbf{p}^{(\mu)} \leq \mathbf{p}^{(\mu-1)}$ for all $\mu\geq 0$. It holds by design that the input of Algorithm \ref{alg:proper} is a clearing vector, so $\mathbf{p}^{*}$ is the only possible such input. By the arguments in the proof of Lemma \ref{lem:max_cv_existence}, Algorithm \ref{alg:proper} with input the maximal clearing payments computes the maximal proper clearing payments, and the claim follows.
\end{proof}

\section{Maximizing the total assets}\label{sec:assets} 
We now turn our attention to financial network games under priority-proportional strategies when the utility is defined as the total assets. We note that in this case, the maximal \mk{clearing payments}, computed in Section \ref{sec:non-strategic} are weakly preferred by all firms among all clearing payments of the given strategy profile; indeed, the utility is computed as the sum between a fixed term (external assets) and the incoming payments which are by definition maximized. So, in case of \mk{various} clearing payments, it is reasonable to limit our attention to the (unique) maximal clearing payments computed in Section~\ref{sec:non-strategic}.

We begin with a negative result regarding the existence of Nash equilibria. 

\begin{theorem}\label{thm:pp-no-nash-equilibria}
Nash equilibria are not guaranteed to exist when firms aim to maximize their total assets.
\end{theorem}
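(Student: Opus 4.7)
The plan is to prove the theorem by explicit counterexample: I would construct a financial network with a small number of strategic firms and verify that every pure strategy profile admits a profitable unilateral deviation for some firm. Under priority-proportional payments each firm's strategy is an ordered partition of its creditor set into priority classes, so a firm with $k$ creditors has a finite (and, for small $k$, small) strategy space, and a network with few strategic firms yields a finite game on which an exhaustive case analysis is feasible.

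The design I would try first uses two strategic firms $A$ and $B$, each with exactly two creditors, linked by a debt cycle: $A$'s creditors are $B$ and a ``sink'' firm $C$ with no outgoing edges, and symmetrically $B$'s creditors are $A$ and a sink $D$. Because under the total-assets utility a firm gains only from external assets plus incoming payments, each of $A$ and $B$ benefits only from flow that returns through the cycle. I would tune the external assets of $A$ and $B$ and the four liability amounts so that the best-response dynamics closes into a cycle of the following shape: if $A$ prioritizes $B$, then $B$ strictly prefers prioritizing $D$; if $B$ prioritizes $D$, then $A$ strictly prefers prioritizing $C$; if $A$ prioritizes $C$, then $B$ strictly prefers prioritizing $A$; and if $B$ prioritizes $A$, then $A$ strictly prefers prioritizing $B$ again, closing the loop.

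For each candidate profile I would apply Theorem~\ref{thm:maxcv-algorithm} (Algorithm~\ref{alg:MCV}) to compute the unique maximal proper clearing payments, read off each firm's total assets, and exhibit the improving deviation explicitly. The main obstacle is handling the ``tied'' profiles in which a strategic firm places both of its creditors in the same priority class (thereby reducing to proportional payments on that class): such profiles are a natural compromise and must also be shown to be unstable. I would handle this by choosing the cycle liabilities asymmetrically enough that, for each strategic firm and each fixed strategy of its opponent, one of the two strict priority orderings strictly dominates the tied one. A careful arithmetic tuning of the four liability amounts, together with, if needed, slightly asymmetric external assets on $A$ and $B$, should suffice to produce a concrete instance in which no pure strategy profile --- strict or tied --- is a Nash equilibrium, proving the theorem.
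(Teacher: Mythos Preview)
Your high-level plan---build an explicit finite game and exhaust all profiles---is exactly what the paper does. The specific construction you propose, however, cannot succeed.

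In your two-firm cycle with sinks, the profile where $A$ plays $(B\,|\,C)$ and $B$ plays $(A\,|\,D)$ is \emph{always} a Nash equilibrium, regardless of how you tune the four liabilities or the external assets. The reason is a simple monotonicity: $A$'s only incoming payment is $p_{BA}$, and $p_{BA}$ is (weakly) increasing in $p_{AB}$, since giving $B$ more can only increase what $B$ is able to return through the cycle. Hence any deviation by $A$ that reduces $p_{AB}$---which is what prioritizing $C$, or splitting proportionally, does---can only weakly lower $A$'s total assets. The symmetric argument applies to $B$. In particular, the second step of your intended best-response cycle (``if $A$ prioritizes $B$, then $B$ strictly prefers prioritizing $D$'') is impossible: when $A$ plays $(B\,|\,C)$, $B$'s assets are $e_B+p_{AB}$, and routing money to the sink $D$ instead of back to $A$ can only shrink $p_{AB}$ via the feedback loop.

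This is precisely why the paper's instance needs \emph{three} strategic firms. There, $v_2$ and $v_3$ both feed $v_1$, but $v_1$ must choose (via $v_6$ versus $v_7$) which of the two cycles to favour; the firm $v_1$ neglects then prefers to divert its payment to its own sink, which in turn makes $v_1$ want to switch sides---and the third option (proportional split) is carefully ruled out by the large-$M$ asymmetry. The paper even remarks that the simpler instance from \cite{Hoefer19} (for edge-ranking strategies) \emph{does} admit an equilibrium once priority-proportional ties are allowed, confirming that two-way constructions of this flavour tend to stabilise. To carry out your program you will need at least one firm whose incoming payments come from two competing sources that cannot both be maximised simultaneously.
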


\begin{proof-sketch}
\begin{figure}[h]
\centering
\includegraphics[height=2.2cm]{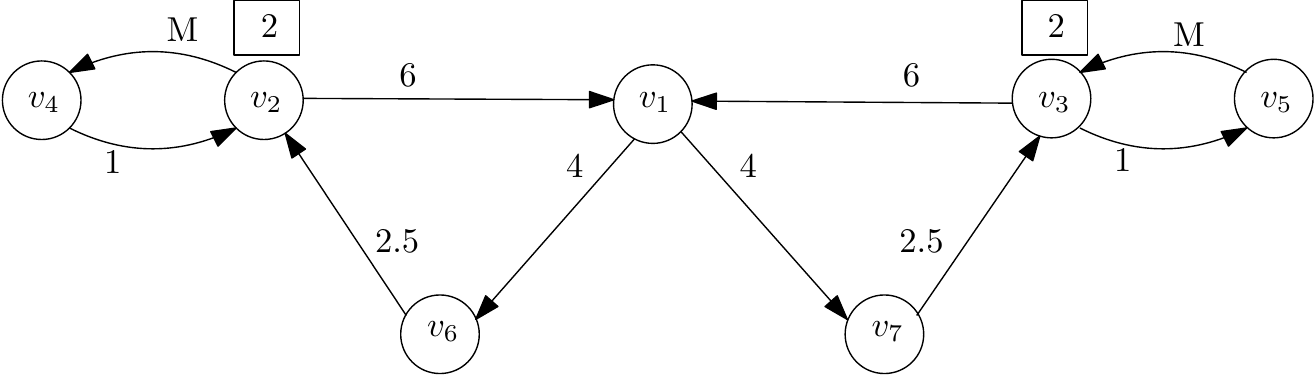}
\caption{A game that does not admit Nash equilibria.}\label{fig:no_NE}
\end{figure}
Consider the financial network depicted in Figure \ref{fig:no_NE} where $M$ is an arbitrarily large integer. Only firms $v_1$, $v_2$, $v_3$ have more than one available strategies and, hence, it suffices to argue about them. \tc{The instance is inspired by an equivalent result in \cite{Hoefer19} regarding edge-ranking strategies, however the instance used in the proof of that result admits an equilibrium under priority-proportional strategies.}

Observe that since $M$ is large, whenever $v_2$ (respectively, $v_3$) has $v_4$ (respectively, $v_5$) in its top priority class, either alone or together with $v_1$, then the payment towards $v_1$ is at most $5.5\varepsilon$, where $\varepsilon = \frac{6}{M+6}$, i.e., a very small payment. Furthermore, $v_2$ and $v_3$ can never fully repay their liabilities to any of their creditors; this implies that any (non-proportional) strategy that has a single creditor in the topmost priority class will never allow for payments to the second creditor.

If none of $v_2$ and $v_3$ has $v_1$ as their single topmost priority creditor, then, by the remark above about payments towards $v_1$, at least one of them has an incentive to deviate and set $v_1$ as their top priority creditor. Indeed, each of $v_2$, $v_3$ has utility at most $\frac{3+3\varepsilon}{1-\varepsilon}$, and at least one of them would receive utility at least $4$ by deviating.
Furthermore, it cannot be that both $v_2$ and $v_3$ have $v_1$ as their single top priority creditor, as at least one of them is in the top priority class of $v_1$. This firm, then, wishes to deviate and follow a proportional strategy so as to receive also the payment from its other debtor.

It remains to consider the case where one of $v_2$, $v_3$, let it be $v_2$, has $v_1$ as the single top priority creditor and the remaining firm, let it be $v_3$, either has a proportional strategy or has $v_5$ as its single top priority creditor. In this setting, if $v_1$ follows a proportional strategy, $v_1$ wishes to deviate and select $v_2$ as its single top priority creditor. Otherwise, if $v_1$ has $v_2$ as its top priority creditor, then $v_3$ deviates and sets $v_1$ as its top creditor. Finally, if $v_1$ has $v_3$ as its single top priority creditor, then $v_3$ deviates to a proportional strategy.

The full table of utilities appears in Appendix \ref{app:no-nash-equilibria} (Table \ref{tab:inexistence}).
\end{proof-sketch}

We now aim to quantify the social welfare loss in Nash equilibria when each firm aims to maximize its total assets. While the focus is on financial network games under priority-proportional payments, we warm-up by considering the well-studied case of proportional payments, where we show that these may lead to outcomes where the social welfare can be far from optimal. 

\begin{theorem}\label{thm:proportional-inefficiency}
Proportional payments can lead to arbitrarily bad social welfare loss with respect to total assets. In acyclic financial networks, the social welfare loss is at most a factor of $n/2$ and this is almost tight.
\end{theorem}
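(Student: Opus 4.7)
\begin{proof-sketch}
The theorem comprises three claims, which I would establish in sequence.

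For unboundedness in general (cyclic) networks, I would construct a directed cycle $v_1 \to v_2 \to \cdots \to v_n \to v_1$ in which each $v_i$ owes a large amount $L$ to its cycle neighbor and a small amount $\varepsilon$ to a private sink $s_i$; only $v_1$ has a positive external asset, equal to $1$. Setting $q = L/(L+\varepsilon)$, the fixed-point equation~(\ref{eq:fixed-point}) under proportional payments yields $p_1 = 1/(1-q^n)$ and $\sw_{\mathrm{prop}} = 1 + 1/(1-q) = 1 + (L+\varepsilon)/\varepsilon$, a value independent of $n$. By contrast, the priority strategy that places each firm's cycle neighbor in its top class admits clearing payments $p_{i,i+1} = L$ throughout, giving $\sw \geq 1 + nL$. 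Taking both $L$ and $n$ large yields an unbounded ratio.

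For the $n/2$ upper bound in acyclic networks, let $E = \sum_i \max(0, e_i)$. I would combine two inequalities. First, for any clearing payments, $\opt \leq nE$: expand $\sw = E + \sum_{i,j} p_{ij}$ and interpret $\sum_{i,j} p_{ij}$ as the total flow of external money through the DAG; acyclicity bounds each unit's path length by $n-1$, so $\sum_{i,j} p_{ij} \leq (n-1)E$. Second, for proportional payments, $\sw_{\mathrm{prop}} \geq 2E$ whenever each firm with positive external assets has at least one creditor, since then the external money is paid out at least once. Combining yields $\opt/\sw_{\mathrm{prop}} \leq n/2$. In the corner case where $\sw_{\mathrm{prop}} < 2E$, every source firm is solvent, so no strategic deviation can move additional money out of the sources; then $\opt = \sw_{\mathrm{prop}}$ and the bound is trivial.

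For near-tightness, I would exhibit the following acyclic instance. Let $v_1$ have external asset $1$ and owe $1$ coin to $v_2$ and $M$ coins (with $M$ large) to a sink $s_1$; add a chain $v_2 \to v_3 \to \cdots \to v_m$ whose edges each carry a unit debt. Under proportional payments, $v_1$ sends only $1/(M+1)$ to $v_2$, which propagates unchanged down the chain, giving $\sw_{\mathrm{prop}} = 1 + (m-1)/(M+1) + M/(M+1) \to 2$ as $M \to \infty$. Under the priority strategy placing $v_2$ strictly above $s_1$, the full coin reaches $v_2$ and cascades through the chain, so $\sw = m$. With $n = m+1$, the ratio tends to $(n-1)/2$, matching $n/2$ up to lower-order terms.

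The most delicate step is the $\sw_{\mathrm{prop}} \geq 2E$ estimate in the acyclic case: it requires tracking how external assets propagate through the DAG under proportional clearing, carefully separating solvent firms (which may dissipate their assets across multiple creditors) from defaulting ones, and ruling out pathological configurations where the source's output fails to contribute.
\end{proof-sketch}
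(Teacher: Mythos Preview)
Your constructions for the first claim (unboundedness) and the third (near-tightness) are correct. The near-tightness example is essentially the paper's up to relabeling, while for unboundedness you build an $n$-cycle with sinks where the paper gets away with a single three-firm network (a two-cycle between $v_1$ and $v_2$ plus one sink); both work, yours is just heavier.

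The genuine gap is in your upper-bound argument for the acyclic case. The inequality $\sw_{\mathrm{prop}} \geq 2E$ is false in general: a firm $v_i$ with $e_i > 0$ but $L_i < e_i$ is solvent and pays out only $L_i$, so its external assets are not ``paid out at least once.'' Your corner-case patch then fails on two counts. First, $\sw_{\mathrm{prop}} < 2E$ does not imply that \emph{every} source firm is solvent; one solvent source with small $L_i$ can already break the inequality while other sources remain in default. Second, and more fatally, even when all source firms are solvent it does not follow that $\opt = \sw_{\mathrm{prop}}$, because non-source firms downstream can still strategize. Concretely: let $v_1$ have $e_1 = 10$ and owe $1$ to $v_2$; let $v_2$ have $e_2 = 0$ and owe $1$ to $v_3$ and $M$ to a sink $v_5$; let $v_3$ owe $1$ to $v_4$. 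Here $E = 10$, the unique source $v_1$ is solvent, $\sw_{\mathrm{prop}} \to 12$ as $M \to \infty$, yet $\opt = 13$ (achieved when $v_2$ prioritizes $v_3$). So your dichotomy collapses.

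The paper sidesteps this by working with $S = \sum_i \min\{e_i, L_i\}$ instead of $E$: every firm pays out at least $\min\{e_i, L_i\}$ under \emph{any} strategy, giving $\sw \geq E + S$ for every clearing payment; and in a DAG each such unit traverses at most $n-1$ edges, giving $\opt \leq E + (n-1)S$. Since $E \geq S$ (leaves have $L_i = 0$), the ratio $\frac{E + (n-1)S}{E + S}$ is at most $\frac{nS}{2S} = n/2$.
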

\begin{proof}
Consider the financial network between firms  $v_1$, $v_2$, and $v_3$ that is shown in Figure \ref{fig:prop-asset-unbounded}, where $M$ is arbitrarily large.
\begin{figure}[h]
\centering
\includegraphics[height=1.5cm]{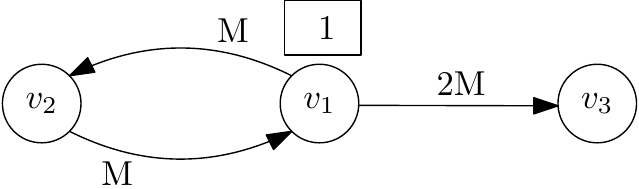}
\caption{A financial network where proportional payments lead to low social welfare with respect to total assets. $M$ is arbitrarily large.}
\label{fig:prop-asset-unbounded}
\end{figure}
Observe that paying proportionally leads to clearing payments $\mathbf{p}_1=(0, 1/2, 1),  \mathbf{p}_2=(1/2, 0, 0), \mathbf{p}_3=(0, 0, 0)$. Hence,  the  total assets are $a_1(\pp)=3/2$, $a_2(\pp)=1/2$, and $a_3(\pp)=1$, and, therefore, $\sw(\pp)=3$. However, if firm $v_1$ chooses to pay firm $v_2$, the resulting clearing payments would be $\mathbf{p}'_1=(0, M, 1),  \mathbf{p}'_2=(M, 0, 0), \mathbf{p}'_3=(0, 0, 0)$ with total assets $a_1(\pp')=1+M$, $a_2(\pp')=M$, and $a_3(\mathbf{P}')=1$, that sum up to $\sw(\pp')=2M+2$. Since $\opt\geq \sw(\mathbf{P}')$ and $M$ can be very large, we conclude that the social welfare achieved by proportional payments can be arbitrarily smaller than the optimal social welfare.

For the case of acyclic financial networks, let $\xi_1$ and $\xi_2$ denote the total external assets of non-leaf and leaf nodes, respectively; note that a leaf node has no creditors in the network. Clearly, since there are no default costs, a non-leaf firm $v_i$ with external assets $e_i$ and total liabilities $L_i$ will generate additional revenue of at least $\min\{e_i, L_i\}$ through payments to its neighboring firms. Hence, for any clearing payments $\pp$  we have that $\sw(\mathbf{P}) \geq \xi_1+\sum_i{\min\{e_i, L_i\}} + \xi_2$. In the optimal clearing  payments, each non-leaf firm $v_i$ with external assets $e_i$ may generate additional revenue of at most $(n-1)\cdot\min\{e_i, L_i\}$, since the network is acyclic, and, therefore, we have that $\opt\leq \xi_1+(n-1)\sum_i{\min\{e_i, L_i\}}+\xi_2$, i.e., the social welfare loss is at most a factor of $n/2$ as the ratio is maximized when, for each $v_i$ we have $e_i=L_i$.

To see that this social welfare loss factor is almost tight, consider an acyclic financial network where firm $v_1$ with external asset $e_1=1$ has two creditors, $v_2$ and $v_3$, with $l_{12} = l^0_{12} =M$ and $l_{13} = l^0_{13} = 1$, where $M$ is an arbitrarily large integer. Firm $v_3$ is, then, the first firm along a path from $v_3$ to $v_n$, where for $i\in\{4,\dots, n\}$ $v_i$ is a creditor of $v_{i-1}$ and all liabilities equal $1$. Under proportional payments, we obtain that  $\mathbf{p}_1=(0, M/(M+1), 1/(M+1), \dots,  0), \mathbf{p}_2=\mathbf{p}_n=(0, \dots,  0) $ and for any $i \in \left\lbrace 3, \dots, n-1\right\rbrace $, $\mathbf{p}_i=(0, \dots, 1/(M+1), \dots, 0)$ where $1/(M+1)$ is the $(i+1)^{th}$ entry, thus  $\sw(\pp) = 2+(n-3)/(M+1)$. In the optimal clearing payments, firm $v_1$ fully repays its liability towards $v_3$ and this payment propagates along the path from $v_3$ to $v_n$ resulting to $\opt = n-1$, leading to a social welfare loss factor of $(n-1)/2-\varepsilon$, where $\varepsilon$ goes to zero as $M$ tends to infinity.
\end{proof}

We remark that, given clearing  payments with proportional payments, a firm may wish to deviate. 
\begin{remark}\label{remark:prop-not-ne}
Proportional payments may not form a Nash equilibrium when firms aim to maximize their total assets.
\end{remark}
\begin{proof}
Consider the instance shown in Figure \ref{fig:prop-asset-unbounded} and the proof of Theorem \ref{thm:proportional-inefficiency}; firm $v_1$ has more total assets when playing strategy $(v_2|v_3)$ than when using the proportional strategy $(v_2,v_3)$.
\end{proof}

We now turn our attention to priority-proportional strategies. To avoid text repetitions, we omit referring to priority-proportional games in our statements.
We start with a positive result on the quality of equilibria when allowing for default costs in the (extreme) case $\alpha=\beta=0$.
\begin{theorem}\label{thm:pp-pos-default}
The price of stability is $1$ if default costs $\alpha=\beta=0$ apply and firms aim to maximize their total assets.
\end{theorem}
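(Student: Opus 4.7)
The plan is to establish a surprisingly strong property: when $\alpha=\beta=0$, priority-proportional strategies become completely vacuous, so every strategy profile yields the same (maximal proper) clearing payments and therefore every strategy profile---including any welfare-maximizing one---is automatically a Nash equilibrium. This immediately gives $\pos = 1$.

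First, I would specialize Equation (\ref{eq:fixed-point}) to $\alpha=\beta=0$: any clearing payments $\pp$ must satisfy $p_i = L_i$ if $v_i$ is solvent, and $p_i = 0$ if $v_i$ is in default. Combined with the per-creditor constraints $0 \leq p_{ij} \leq l_{ij}$ and $\sum_j l_{ij} = L_i$, this pins down the entire payment vector of every firm: a solvent $v_i$ is forced to pay $p_{ij} = l_{ij}$ to every creditor (the upper bounds must all bind in order to sum to $L_i$), and a defaulting $v_i$ is forced to pay $p_{ij} = 0$ to every creditor. The priority ordering $s_i$ and Equation (\ref{eq:pp-payments}) play no role whatsoever in determining the per-creditor breakdown.

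Next, I would conclude that the set of clearing payments, and thus by Lemma \ref{lem:max_cv_existence} the unique maximal proper clearing payments computed by Algorithm \ref{alg:MCV}, depends only on the underlying financial network and not on the strategy profile. Hence every firm's utility $a_i(\pp) = e_i + \sum_j p_{ji}$ is invariant across strategy profiles, so no deviation can ever change anyone's utility; every strategy profile is trivially a Nash equilibrium whose social welfare is a fixed constant equal to $\opt$. The ratio in the definition of $\pos$ is therefore identically $1$ (and, incidentally, $\poa=1$ follows as well). I do not foresee any substantive obstacle; the argument is essentially a direct consequence of the fact that with zero default costs, defaulting firms pay nothing out, which collapses any combinatorial content of the priority ordering. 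The only minor technical point worth noting is that Algorithm \textsc{Proper} operates purely on the already strategy-independent per-creditor breakdown to zero out ``thin-air'' payments, so its output also remains strategy-invariant.
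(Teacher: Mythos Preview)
Your proposal is correct and follows essentially the same approach as the paper: both exploit the observation that with $\alpha=\beta=0$ a solvent firm necessarily pays every creditor in full while a defaulting firm pays nothing, so the priority ordering has no effect on the realized payments. The paper phrases this as ``the strategy is irrelevant'' and concludes that an optimal strategy profile is a Nash equilibrium; you push the same observation one step further to note that \emph{all} strategy profiles yield identical maximal clearing payments (a fact the paper itself uses later, in the proof of part~(i) of Theorem~\ref{thm:equity-default-costs}), which additionally yields $\poa=1$ as a free corollary.
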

\begin{proof}
Consider the \mk{equilibrium resulting to the optimal social welfare}. Clearly, each solvent firm pays all its liabilities to its creditors and, hence, its strategy is irrelevant. Furthermore, each firm that is in default cannot make any payments towards its creditors, due to the default costs, therefore its strategy is irrelevant as well. Since no firm can increase its total assets, any strategy profile \mk{that leads to optimal social welfare} is a Nash equilibrium and the theorem follows.
\end{proof}

In the more general case, however, the price of stability may be unbounded, as the following result suggests. Recall that the setting without default costs corresponds to $\alpha=\beta=1$.
\begin{theorem}\label{thm:pp-pos-positivedefault}
The price of stability is unbounded if default costs $\alpha>0$ or $\beta >0$ apply when firms aim to maximize their total assets.
\end{theorem}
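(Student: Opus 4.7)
The plan is to exhibit a family of priority-proportional instances, parameterized by an integer $n$, whose unique Nash equilibrium has social welfare bounded by a constant (independent of $n$), while some other strategy profile achieves welfare $\Omega(n)$. Intuitively, the strategic firm $v_1$ will face a choice between a short cycle that returns money to itself (personally attractive) and a long chain of \emph{just solvent} firms that propagates value through many nodes (socially attractive). Whenever $\alpha > 0$ or $\beta > 0$, the cycle returns a strictly positive amount to $v_1$, so $v_1$ strictly prefers the cycle strategy and the socially efficient profile fails to be an equilibrium.

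Concretely, take a solvent source $v_0$ with $e_0 = 1$ and $l_{01} = 1$; the strategic firm $v_1$ with $e_1 = 1$ and two large equal debts $l_{12} = l_{13} = M$ (so that $v_1$ is forced into default); a firm $v_2$ that owes $v_1$ back an amount $c$, closing the cycle $v_1 \leftrightarrow v_2$; and a chain $v_3 \to v_4 \to \cdots \to v_n$ with each edge of weight $d$. Setting $d = \alpha + \beta$ makes each chain firm just solvent when $v_1$ routes all outgoing payments to $v_3$, so the chain propagates $d$ to every chain firm. Setting $c = (\alpha + \beta)/(1 - \beta)$ (with the natural adjustment when $\beta \to 1$) makes $v_2$ just solvent when $v_1$ routes all outgoing payments to $v_2$, thereby returning exactly $c$ back to $v_1$.

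The core step is comparing $v_1$'s utility across its three candidate priority-proportional strategies. Solving the fixed point under $(v_2 \mid v_3)$ gives $p_{12} = p_{21} = c$, so $v_1$'s total assets equal $2 + c$. Under $(v_3 \mid v_2)$ no money flows through $v_2$, so $v_1$'s total assets equal just $2$. Under the proportional strategy $(v_2, v_3)$ one checks that $v_2$ receives strictly less than $c$ and hence defaults, so its feedback to $v_1$ is only $\beta$ times its received payment; a short computation shows the resulting utility of $v_1$ is strictly less than $2 + c$ whenever $(\alpha, \beta) \neq (0, 0)$. Thus $(v_2 \mid v_3)$ is the unique best response of $v_1$ and hence the unique Nash equilibrium. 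At this equilibrium only $v_0, v_1, v_2$ have positive total assets, so the welfare is bounded by a constant, while the profile $(v_3 \mid v_2)$ yields welfare $3 + (n-2)(\alpha+\beta) = \Omega(n)$, proving $\pos = \Omega(n)$.

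I expect the main technical obstacle to be verifying the strict preference $(v_2 \mid v_3) \succ_1 (v_2, v_3)$ uniformly over the parameter region $\{(\alpha, \beta) \in [0,1]^2 : (\alpha, \beta) \neq (0, 0)\}$, since the formulas for $c$ and $d$ degenerate at the corners (particularly $\beta \to 1$, $\alpha = 0$, or $\beta = 0$) and the fixed-point analysis must be reworked regime by regime with appropriate rescaling of $M$. The qualitative picture nevertheless persists in each regime: the cycle always provides strictly positive feedback to $v_1$ whenever some default cost retention rate is positive, while the chain provides none, and the social-welfare gap between the two profiles grows linearly in $n$.
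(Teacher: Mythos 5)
Your proposal is correct and rests on the same core mechanism as the paper's proof: a single strategic firm $v_1$ in default whose unique best response, whenever $\alpha>0$ or $\beta>0$, is to prioritize the creditor that cycles money back to it (yielding utility $2+c>2$), even though the socially valuable profile routes its payment elsewhere; since $v_1$ is the only strategic firm, the best equilibrium coincides with the worst one and the price of stability equals the full welfare gap. Where you differ is in the amplification gadget and the case analysis. The paper keeps the network at four or five nodes and makes $\opt$ unbounded via a two-cycle $v_3\leftrightarrow v_4$ with liability $M\to\infty$ whose firms become exactly solvent once seeded by the small payment $p_{13}$; it also splits the argument into two separate instances (one for $\beta>0$, one for $\alpha>0,\beta=0$). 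You instead grow the network, using a length-$(n-2)$ chain of just-solvent firms each passing along $d=\alpha+\beta$, so that $\opt=\Omega(n)$ while the equilibrium welfare is the constant $3+2c$; this handles both parameter regimes in one parametric family and, as a side benefit, makes the properness of the optimal clearing payments immediate (every payment traces back to $v_0$'s or $v_1$'s external assets), whereas the paper's $M$-cycle relies on the seeded-cycle loophole in the properness definition. The one loose end you flag, the degeneration of $c=(\alpha+\beta)/(1-\beta)$ at $\beta=1$, is not a genuine gap: with $\beta=1$ the fixed point under $(v_2\mid v_3)$ gives $p_1=\alpha+1+p_{21}$, so $v_2$ is solvent and returns its full liability for \emph{any} choice of $c>0$, and taking $d=\alpha+1$ for the chain completes that regime without rescaling $M$. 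Your verification that the proportional strategy yields $v_1$ strictly less than $2+c$ (because $v_2$ then defaults and feeds back only a $\beta$-fraction of half the outflow) is the right check and goes through uniformly for $\beta\in[0,1)$ with $\alpha+\beta>0$.
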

\begin{proof}
We begin with the case where $\beta>0$ and consider the financial network shown in Figure \ref{fig:alpha} where $M$ is arbitrarily large. Firm $v_1$ is the only firm that can strategize about its payments. Its strategy set comprises $(v_2|v_3)$,  $(v_3|v_2)$, and $(v_2,v_3)$, which result in utility $1+\beta$, $1$, and $1+\frac{2\beta^3}{M+2\beta-2\beta^3}$, respectively; note that, unless $v_1$ selects strategy $(v_2| v_3)$, $v_2$ is also in default as $M$ is arbitrarily large. For sufficiently large $M$, we observe that any Nash equilibrium must have $v_1$ choosing strategy $(v_2|v_3)$, leading to clearing payments $\mathbf{p}_1=(0, \beta^2+\beta, 0, 0, 0), \mathbf{p}_2=(\beta, 0, 0, 0, 0), \mathbf{p}_3=\mathbf{p}_4=(0, 0, 0, 0, 0)$, and $\mathbf{p}_5=(1, 0, 0, 0, 0)$ with $\sw(\pp)=2+\beta^2+2\beta$. Now, when $v_1$ chooses strategy $(v_3|v_2)$, we obtain clearing payments $\mathbf{p}'_1=(0, 0, \beta, 0, 0), \mathbf{p}'_2=(0, 0, 0, 0, 0), \mathbf{p}'_3=(0, 0, 0, M, 0), \mathbf{p}'_4=(0, 0, M, 0, 0)$, and $\mathbf{p}'_5=(1, 0, 0, 0, 0)$ with $\sw(\pp') =2M+2+\beta$. The claim follows since $\opt\geq \sw(\pp')$.

Now, let us assume that $\alpha>0$ and $\beta=0$ and consider the financial network shown in Figure~\ref{fig:beta}.
\begin{figure}[h]
\centering
\subfigure[The case where $\beta>0$]{
\centering
\includegraphics[width=5cm]{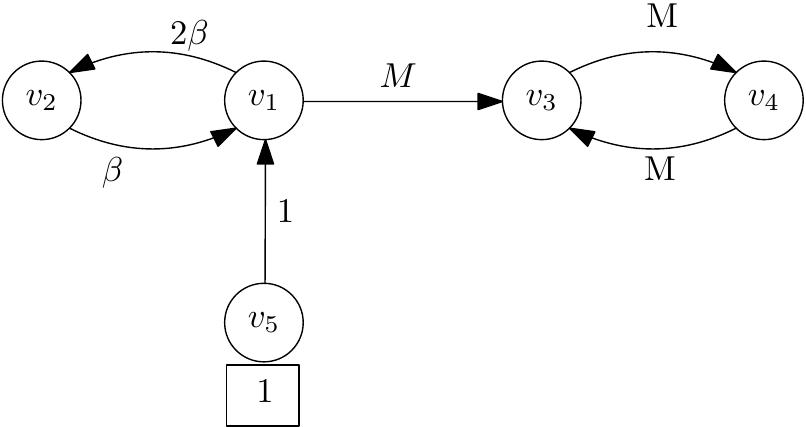}
\label{fig:alpha}}
\quad
\subfigure[The case where $\alpha>0$ and $\beta=0$]{
\centering
\includegraphics[width=5cm]{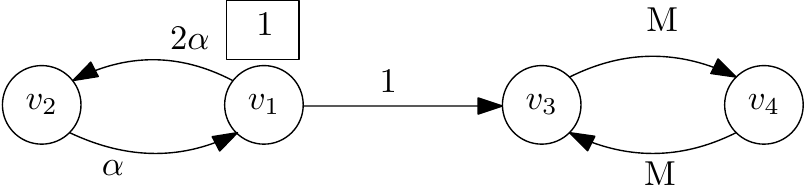}
\label{fig:beta}}
\caption{The instances used in the proof of the unbounded price of stability.}\label{fig:union-figure}
\end{figure}

Again, firm $v_1$ is
the only firm that can strategize about its payments. $v_1$'s total assets when choosing strategies $(v_2|v_3)$, $(v_3|v_2)$, and $(v_2, v_3)$ are $1+\alpha$, $1$, and $1$, respectively; note that when $v_1$ chooses strategy $(v_2,v_3)$, firm $v_2$ is also in default as it receives a payment of $\frac{2\alpha^2}{1+2\alpha}$ which is strictly less than $\alpha$ for any $\alpha>0$. Hence, in any Nash equilibrium $v_1$ chooses strategy $(v_2|v_3)$, resulting in clearing payments $\mathbf{p}_1=(0, \alpha, 0, 0), \mathbf{p}_2=(\alpha, 0, 0, 0), \mathbf{p}_3=\mathbf{p}_4=(0, 0, 0, 0)$ with social welfare $\sw(\pp)=1+2\alpha$. The optimal social welfare, however, is achieved when $v_1$ chooses strategy $(v_3|v_2)$, resulting in  clearing payments $\mathbf{p}'_1=(0, 0, \alpha, 0), \mathbf{p}'_2=(0, 0, 0, 0), \mathbf{p}'_3=(0, 0, 0, M)$, and $\mathbf{p}'_4=(0, 0, M, 0)$ with $\opt = 2M+1+\alpha$.
\end{proof}

We now show that the price of stability may also be unbounded in the absence of default costs, if negative external assets are allowed.
\begin{theorem}\label{thm:pp-pos-negative}
The price of stability is unbounded if negative external assets are allowed and firms aim to maximize their total assets.
\end{theorem}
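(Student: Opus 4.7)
The plan is to exhibit an explicit family of instances, parameterized by the number $n$ of firms, in which every Nash equilibrium has bounded social welfare while the optimum welfare grows linearly in $n$, mirroring the spirit of Theorem \ref{thm:pp-pos-positivedefault} but using negative external assets in place of default costs to create the required threshold effect. The strategic firm $v_1$ will face a tension between collecting a small rebate from a short cycle (the selfish option) and sending its money down a long chain of debts that can be propagated only once a carefully calibrated threshold has been met (the socially optimal option).

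Concretely, I would take $v_1$ with $e_1=1$ owing one coin to each of $v_2$ and $v_3$; $v_2$ with zero external assets and a small debt back to $v_1$ so that the short cycle $v_1\leftrightarrow v_2$ returns cash to $v_1$; $v_3$ with a slightly negative external asset and a large debt $M$ to $v_4$; and a path $v_4\to v_5\to \dots \to v_n$ in which each intermediate firm has zero external asset and owes $M$ to the next. The role of the negative external asset of $v_3$ is to create a minimum payment threshold: $v_3$ forwards money down the chain only if its incoming payment strictly exceeds the absolute value of $e_3$, in which case the surplus propagates essentially intact through the remaining $n-3$ firms in the chain.

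The main analytic step is to enumerate $v_1$'s three priority-proportional strategies and apply Algorithm \ref{alg:MCV} to compute the resulting maximal clearing payments in each case. The parameters should be chosen so that the two strategies placing $v_2$ in $v_1$'s top priority class, namely $(v_2|v_3)$ and the proportional strategy $(v_2,v_3)$, both yield $v_1$ the full rebate from the cycle but deliver to $v_3$ only an amount strictly below the threshold, so the downstream chain collapses and the social welfare is a constant independent of $n$. In contrast, the strategy $(v_3|v_2)$ cuts the cycle and costs $v_1$ the rebate, but delivers to $v_3$ exactly one coin, which exceeds the threshold by a small positive amount that then propagates unchanged along the chain, yielding social welfare of order $n$.

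Since $(v_2|v_3)$ and $(v_2,v_3)$ both weakly maximize $v_1$'s utility while $(v_3|v_2)$ strictly underperforms them for $v_1$, the only Nash equilibria are the first two, each with constant social welfare; the optimum is achieved only by the non-equilibrium strategy $(v_3|v_2)$ and grows linearly with $n$, so the price of stability is unbounded. The main obstacle is the simultaneous tuning of the parameters: the liability from $v_2$ to $v_1$ must be large enough that the cycle rebate makes $(v_2|v_3)$ and $(v_2,v_3)$ strictly preferable for $v_1$ over $(v_3|v_2)$, while $|e_3|$ must be chosen to lie strictly between the amount $v_3$ receives under the proportional split and the amount $v_3$ receives under $(v_3|v_2)$, yet small enough that the chain still carries a strictly positive payment through all $n-3$ downstream firms.
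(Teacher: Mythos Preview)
Your plan has a gap in the incentive analysis. With the parameters you state ($e_1=1$, $l_{12}=l_{13}=1$, $e_2=0$, and a small debt $l_{21}=c>0$), the strategy $(v_3|v_2)$ does \emph{not} cost $v_1$ the rebate from the short cycle. Under $(v_3|v_2)$ the payments $p_{13}=1$, $p_{12}=p_{21}=c$ form a valid clearing: $v_1$ has assets $1+c$, sends $1$ to its top priority $v_3$ and the remaining $c$ to $v_2$, while $v_2$ receives $c$ and exactly covers its liability $c$. This is the maximal fixed point, and since $v_1$ has positive external assets and $p_{12}=c>0$, Algorithm~\textsc{Proper} marks $v_2$ as reachable and keeps these payments. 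Hence $a_1(\pp)=1+c$ under \emph{all three} of $v_1$'s strategies, $(v_3|v_2)$ is itself a Nash equilibrium, and the price of stability of your instance is~$1$. No tuning of $c$ or $|e_3|$ alone fixes this: as long as $e_2=0$ and $l_{13}\le e_1$, the short cycle survives every strategy of $v_1$ in the maximal proper clearing.

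The paper avoids this trap by putting the negative external asset on the short-cycle partner $v_2$ rather than on the gateway $v_3$: with $e_2<0$, firm $v_2$ can repay $v_1$ only if it receives strictly more than $|e_2|$, which happens only under $(v_2|v_3)$, and this is what creates the strict preference for $v_1$. A secondary difference is that the paper amplifies the optimum via a two-firm cycle $v_3\leftrightarrow v_4$ of weight $M$ (four firms in total, welfare $\Theta(M)$) rather than a long chain of length $n$; your chain idea would also yield an unbounded price of stability once the strict-preference issue is repaired, but it is less economical.
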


\begin{proof}
Consider the financial network shown in Figure \ref{fig:negative-asset}.
\begin{figure}[h]
\centering
\includegraphics[height = 1.6cm]{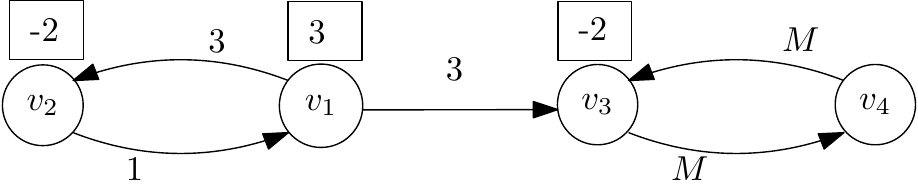}
\caption{A financial network with unbounded price of stability.}
\label{fig:negative-asset}
\end{figure}
Only firm $v_1$ can strategize about its payments. Clearly, $v_1$'s total assets equal $3$, unless $v_2$ pays its debt (even partially). This is only possible when $v_1$ chooses strategy $(v_2|v_3)$ and prioritizes  the payment of $v_2$ (note $v_2$'s negative external assets will ``absorb'' any payment that is at most $2$). Therefore, the resulting Nash equilibrium leads to the clearing payments $\mathbf{p}_1=(0, 3, 1, 0),  \mathbf{p}_2=(1, 0, 0, 0), \mathbf{p}_3=(0, 0, 0, 0)$, and $\mathbf{p}_4=(0, 0, 0, 0)$ with social welfare $\sw(\pp) = 4$. However, when $v_1$ chooses strategy $(v_3|v_2)$ we obtain the clearing payments $\mathbf{p}'_1=(0, 0, 3, 0),\mathbf{p}'_2=(0, 0, 0, 0), \mathbf{p}'_3=(0, 0, 0, M)$, and $\mathbf{p}'_4=(0, 0, M, 0)$ with social welfare $\sw(\pp') = 2M+2$. Hence, since $\opt\geq \sw(\pp')$ we obtain the theorem.
\end{proof}

The proof of Theorem \ref{thm:pp-pos-negative} in fact holds for any type of strategies as $v_1$ always prefers to pay in full its liability to $v_2$. This includes the case of a very general payment strategy scheme, namely coin-ranking strategies  \cite{Hoefer19} that are known to have price of stability $1$ with non-negative external assets.

Next, we show that the price of anarchy can be unbounded even in the absence of default costs, CDS contracts, and negative externals.
Bertschinger et al. \cite{Hoefer19} have shown a similar result for coin-ranking strategies, albeit for a network that has no external assets; our result extends to the case of coin-ranking strategies and strengthens the result of \cite{Hoefer19} to capture the case of proper \mk{clearing payments}. 

\begin{theorem}\label{thm:crazy}
The price of anarchy is unbounded when firms aim to maximize their total assets.
\end{theorem}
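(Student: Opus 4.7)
The plan is to build a small financial network, without default costs, CDS contracts, or negative external assets, that admits a Nash equilibrium with constant social welfare while the optimum grows linearly in the number of firms. The construction exploits the observation that a strategizing firm whose incoming payments are the same under every strategy is indifferent among them, so \emph{every} strategy profile (including the worst one) is a Nash equilibrium.

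Concretely, I would let $v_1$ be a firm with $e_1=1$ that has two creditors: a sink $v_2$ with no outgoing debt and no external assets, and the head $v_3$ of a chain $v_3\to v_4\to\cdots\to v_n$ of unit debts, where all intermediate firms have zero external assets. Firm $v_1$ owes $1$ to each of $v_2$ and $v_3$. Since every firm other than $v_1$ has at most one creditor, only $v_1$ has a nontrivial priority-proportional choice. Crucially, no payment path ever returns to $v_1$, so its incoming payments are $0$ under every strategy, and its total assets equal $e_1=1$ regardless; hence every priority-proportional strategy of $v_1$ is a best response and every strategy profile is a Nash equilibrium.

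I then single out the equilibrium $s_1=(v_2\mid v_3)$: the unit of external assets flows entirely to the sink $v_2$, the chain receives nothing, and the maximal proper clearing payments computed by Algorithm \ref{alg:MCV} yield social welfare $e_1+p_{12}=2$. The alternative profile $s_1=(v_3\mid v_2)$ propagates the unit all the way along the chain, producing social welfare at least $n-1$. Comparing the two gives $\poa \geq (n-1)/2$, which is unbounded as $n$ grows.

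The only nontrivial point, and what strengthens \cite{Hoefer19}, is verifying that the bad equilibrium consists of \emph{proper} clearing payments: in contrast to the instance of \cite{Hoefer19} which has no external assets, here the single positive external asset at $v_1$ legitimizes every surviving payment in each profile (the reachability test of Algorithm \ref{alg:proper} preserves them, since $v_1$ is marked and $v_2$ is reached through the positive edge $p_{12}>0$). I do not foresee a significant obstacle; the argument is essentially identical for coin-ranking strategies, since $v_1$'s indifference never relied on ties or on the particular strategy scheme, which gives the extension mentioned before the statement.
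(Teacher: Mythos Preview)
Your proof is correct and follows the same blueprint as the paper: a single firm $v_1$ with positive external assets is the only one with a nontrivial strategy choice, no payment path returns to $v_1$, so its total assets are fixed and every strategy profile is a Nash equilibrium; one equilibrium sends the money to a dead-end sink, another to an amplifying structure, and the payments are proper because everything is traceable to $e_1>0$.

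The only difference is the amplifying gadget. The paper uses four firms, with $v_3$ and $v_4$ forming a two-cycle of mutual liability $M$; the PoA is driven to infinity by letting $M\to\infty$ on a network of fixed size. You instead use an acyclic chain $v_3\to\cdots\to v_n$ with unit liabilities and let $n\to\infty$. Both work; the paper's construction shows unboundedness already with four firms (and matches the cyclic gadgets used elsewhere in the paper, e.g.\ Theorems~\ref{thm:pp-pos-positivedefault} and~\ref{thm:pp-pos-negative}), while yours shows that the PoA is unbounded even on acyclic networks with bounded liabilities, which is a mild strengthening in a different direction.
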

\begin{proof}
Consider the financial network between firms $v_i$, $i\in [4]$, that is shown in Figure \ref{fig:coin-regular-assets}, where $M$ is arbitrarily large.
\begin{figure}[h]
\centering
\includegraphics[height=2.5cm]{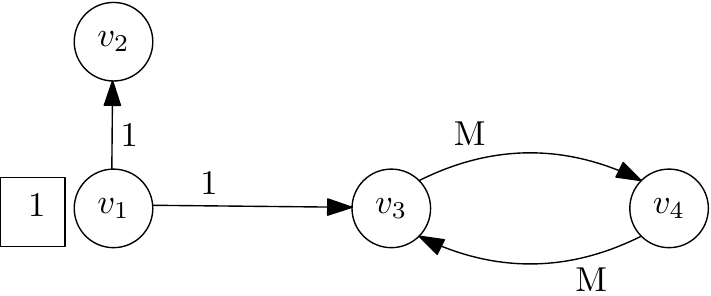}
\caption{A financial network that yields unbounded price of anarchy with respect to total assets.}
\label{fig:coin-regular-assets}
\end{figure}
Clearly, firm $v_1$ is the only firm that can strategize about its payments, and observe that $a_1 = 1$ regardless of $v_1$'s strategy. Hence, any strategy profile in this game is a Nash equilibrium. Consider the clearing payments $\mathbf{p}_1=(0, 1, 0, 0), \mathbf{p}_2=\mathbf{p}_3=\mathbf{p}_4=(0, 0, 0, 0)$ that are obtained when $v_1$'s strategy is $s_1 = (v_2|v_3)$ and note that $\sw(\mathbf{P}) = 2$. If, however, $v_1$ selects strategy $s'_1=(v_3|v_2)$, we end up with the clearing payment $ \mathbf{p}'_1=(0, 0, 1, 0), \mathbf{p}'_2=(0, 0, 0, 0),\mathbf{p}'_3=(0, 0, 0, M)$, and $\mathbf{p}'_4=(0, 0, M, 0)$ and we obtain $\sw(\mathbf{P}') = 2M+2$. Hence, $\poa\geq \frac{2M+2}{2} = M+1$ which can become arbitrarily large.
\end{proof}

\section{Maximizing the equity}\label{sec:equities} 
In this section we consider the case of equities. Similarly, the social welfare is defined as the sum of equities. We present interesting properties of clearing \mk{payments} and observe that Nash equilibria always exist in such games, contrary to the case of total assets.

\subsection{Existence and properties of equilibria}

We warm up with a known statement in absence of default costs; the short proof is included here for completeness.
%
In particular, each firm obtains the same equity under all clearing payments, so it does not have a preference; \mk{this provides additional justification to our assumption to limit our attention to maximal clearing payments computed in Section~\ref{sec:non-strategic}, in case of various clearing payments}.

\begin{lemma}[\cite{GSRB17}]\label{lem:same-equity}
Each firm obtains the same equity under different clearing payments, given a strategy profile. That is, given the firms' strategies, for any two different   \mk{(not necessarily maximal)} clearing payments $\mathbf{P}$ and $\mathbf{P}'$, it holds $E_i(\mathbf{P}) = E_i(\mathbf{P}')$ for each firm $v_i$.
\end{lemma}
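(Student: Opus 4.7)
The plan is to pair a conservation identity for the sum of equities with the monotonicity of individual equity under the pointwise order on payment matrices, and then use the maximal clearing payment provided by Lemma~\ref{lem:max_cv_existence} as a common comparison point.

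First I would prove the identity $\sum_{i=1}^n E_i(\pp) = \sum_{i=1}^n e_i$ for every clearing payment $\pp$ consistent with the fixed strategy profile, in the no-default-costs regime of this section ($\alpha=\beta=1$). The key step is to show that each summand equals $e_i + \sum_j p_{ji} - p_i$: for a solvent firm this is immediate since $p_i = L_i$ and $E_i(\pp)=a_i(\pp)-L_i$, while for a firm in default the fixed-point equation~(\ref{eq:fixed-point}) yields $p_i = e_i + \sum_j p_{ji} = a_i(\pp)$, so $E_i(\pp)=0=e_i+\sum_j p_{ji}-p_i$ as well. Summing over $i$ and cancelling $\sum_{i}\sum_{j} p_{ji}$ with $\sum_i p_i$ gives the identity.

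Next I would use monotonicity: if $\pp\leq\pp'$ pointwise then $a_i(\pp)\leq a_i(\pp')$ for each $i$, and since $E_i=\max\{0,a_i-L_i\}$ is non-decreasing in $a_i$ we get $E_i(\pp)\leq E_i(\pp')$. Now let $\pp^{\max}$ denote the maximal clearing payment supplied by Lemma~\ref{lem:max_cv_existence}; any clearing payment $\pp$ satisfies $\pp\leq\pp^{\max}$, and hence $E_i(\pp)\leq E_i(\pp^{\max})$ for every $i$. Since the conservation identity forces the two sums $\sum_i E_i(\pp)$ and $\sum_i E_i(\pp^{\max})$ to coincide, each of these coordinate-wise inequalities must be tight. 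Applying the same comparison to a second clearing payment $\pp'$ yields $E_i(\pp)=E_i(\pp^{\max})=E_i(\pp')$, as required.

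The only subtle step is the conservation identity for defaulting firms, which relies crucially on $\alpha=\beta=1$ so that $p_i=a_i(\pp)$ holds exactly; with positive default costs this identity fails, $\sum_i E_i(\pp)$ is no longer an invariant across different clearing payments, and the argument collapses. This is precisely why the lemma is phrased in the no-default-costs setting. Restricting attention to proper clearing payments is harmless: Lemma~\ref{lem:max_cv_existence} already delivers a maximum inside that class, and both conservation and monotonicity are consequences of the defining equations that every clearing payment satisfies.
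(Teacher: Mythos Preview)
Your proposal is correct and follows essentially the same route as the paper's own proof: establish the accounting identity $E_i(\pp)=e_i+\sum_j p_{ji}-p_i$ for every firm (valid because $\alpha=\beta=1$), sum it to see that $\sum_i E_i(\pp)$ is independent of the clearing payment, invoke the maximal clearing payment from Lemma~\ref{lem:max_cv_existence} to get pointwise monotonicity $E_i(\pp)\leq E_i(\pp^{\max})$, and conclude equality termwise. Your treatment of the monotonicity step (via $a_i$ and the $\max$) is slightly more explicit than the paper's, but the argument is the same.
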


\begin{proof}
Let $\mathbf{P}^*$ be the maximal clearing  payments and let $\mathbf{P}$ be any other clearing payments.
The corresponding equities for each firm $v_i$ are
\begin{align}\label{eq:opt-pp-clearing}
E_i(\mathbf{P}^*) = \max(0, e_i+\sum_{j\in [n]}{p^*_{ji}-L_i}) = e_i + \sum_{j\in [n]}{p^*_{ji}} - \sum_{j\in [n]}{p^*_{ij}}
\end{align}
and
\begin{align}\label{eq:pp-clearing}
E_i(\mathbf{P}) = \max(0, e_i+\sum_{j\in [n]}{p_{ji}-L_i}) = e_i + \sum_{j\in [n]}{p_{ji}} - \sum_{j\in [n]}{p_{ij}},
\end{align}
respectively. The rightmost equalities above hold, since each firm either pays all its liabilities, if it is solvent, or uses all its external and internal assets to pay part of its liabilities, if it is in default.

Using (\ref{eq:opt-pp-clearing}) and (\ref{eq:pp-clearing}) and by summing over all firms, we obtain
\begin{equation*}
\sum_{i\in [n]}{E_i(\mathbf{P}^*)}-\sum_{i\in [n]}{E_i(\mathbf{P})} = \sum_{i\in [n]}{\sum_{j\in [n]}{\left(p^*_{ji}-p^*_{ij}\right)}} - \sum_{i\in [n]}{\sum_{j\in [n]}{\left(p_{ji}-p_{ij}\right)}} =0,
\end{equation*}
as, for any clearing payments, the total incoming payments equal the total outgoing payments.

We remark that, since $\mathbf{P}^*$ are maximal clearing payments we get that $\sum_{j\in [n]}{p^*_{ij}}\geq \sum_{j\in [n]}{p_{ij}}$ for each firm $v_i$ and, hence, by (\ref{eq:opt-pp-clearing}) and (\ref{eq:pp-clearing}) it can only be that $E_i(\mathbf{P}^*)\geq E_i(\mathbf{P})$. Therefore, since we have shown that $\sum_{i\in [n]}{E_i(\mathbf{P}^*)}=\sum_{i\in [n]}{E_i(\mathbf{P})}$, we conclude that $E_i(\mathbf{P}^*)= E_i(\mathbf{P})$ for each firm~$v_i$.
\end{proof}

Lemma \ref{lem:same-equity} also indicates that, for any given strategy profile, any firm is always either solvent or in default in all resulting clearing payments. We exploit this property to obtain the following result; this extends a result by Papp and Wattenhofer (Theorem 7 in \cite{wattenhofer2020network}) which holds for the maximal clearing payments.

\begin{theorem}\label{thm:equity-all-clearing}
\mk{Even with CDS contracts, any strategy profile is a Nash equilibrium, when firms aim to maximize their equity. This holds even if the clearing payments that are realized are not maximal.}
\end{theorem}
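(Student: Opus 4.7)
The plan is to combine Lemma~\ref{lem:same-equity} with a short ``swap'' argument on the maximal clearing payments. By Lemma~\ref{lem:same-equity}, all clearing payments of a fixed strategy profile yield the same equity for every firm, so each firm's equity is a well-defined function of the strategy profile alone and matches its value at the maximal clearing payments. Hence, if I can show that no unilateral deviation of any firm $v_i$ strictly increases $v_i$'s equity at the maximal clearing payments, this establishes the Nash equilibrium property for every strategy profile, regardless of which clearing payments are actually realized.

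Fix $v_i$, a strategy profile $\s$, and a deviation $s'_i$, and let $\pp^*$ and $\tilde{\pp}$ denote the maximal clearing payments of $\s$ and $(s'_i,\s_{-i})$, respectively. If $v_i$ is solvent under $\pp^*$, then $p^*_{ij}=l_{ij}$ for every creditor $j$, since the total outgoing equals $L_i$ and each entry is capped by $l_{ij}$. Paying every liability in full trivially satisfies Equation~(\ref{eq:pp-payments}) for any priority ordering, so $\pp^*$ is a valid clearing also under $(s'_i,\s_{-i})$; Lemma~\ref{lem:same-equity} then gives $E_i(\tilde{\pp})=E_i(\pp^*)$.

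If instead $v_i$ is in default under $\pp^*$, then $E_i(\pp^*)=0$ and it suffices to show $v_i$ is also in default under $\tilde{\pp}$. Suppose for contradiction that $v_i$ is solvent under $\tilde{\pp}$. Then $\tilde{p}_{ij}=l_{ij}$ for every $j$, and by the same ``pay-in-full is consistent with any priority ordering'' reasoning, $\tilde{\pp}$ is a valid clearing also under $\s$ (the fixed-point equations of the other firms, whose strategies are unchanged, are automatically satisfied). Maximality of $\pp^*$ under $\s$ then forces $\tilde{\pp}\leq\pp^*$ pointwise, so $a_i(\pp^*)\geq a_i(\tilde{\pp})\geq L_i$, contradicting the assumption that $v_i$ is in default under $\pp^*$.

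The main obstacle is justifying that the pay-in-full pattern is simultaneously consistent with the priority-proportional equations of both strategies, especially in the presence of CDS contracts. This is clean: the effective liabilities depend on the recovery rates, but these are identical across the specific clearing payments being compared (and invariant across clearings of a fixed profile by Lemma~\ref{lem:same-equity}), while Equation~(\ref{eq:pp-payments}) is satisfied vacuously whenever $p_{ij}=l_{ij}$ for every $j$. Once this cross-consistency is established, the case analysis above closes the proof without any further appeal to conservation or monotonicity arguments.
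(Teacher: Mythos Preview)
Your overall plan coincides with the paper's: both rest on the ``pay-in-full is consistent with any priority ordering'' swap, and your solvent case is handled exactly as in the paper. The default case, however, takes a detour through pointwise maximality that does not hold up in the presence of CDS contracts. Your chain $a_i(\pp^*)\geq a_i(\tilde{\pp})\geq L_i$ treats $L_i$ as a constant, but with CDS the total liability $L_i$ depends on the recovery rates and hence on the particular clearing payments; you would need $L_i(\tilde{\pp})\geq L_i(\pp^*)$ to close the contradiction, and your parenthetical appeal to Lemma~\ref{lem:same-equity} for this is misplaced---that lemma asserts invariance of \emph{equities}, not of recovery rates or of the $L_i$'s (two clearings of the same profile can have different recovery rates for a firm in default). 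A second, related issue is that the paper does not establish the existence of maximal clearing payments once CDS contracts are allowed (Lemma~\ref{lem:max_cv_existence} covers only default costs), so even the comparison $\tilde{\pp}\leq\pp^*$ rests on an unproved assumption here.

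The repair is immediate and is precisely what the paper does: in the default case, once you have shown via the swap that $\tilde{\pp}$ is also a clearing under $\s$, apply Lemma~\ref{lem:same-equity} directly (exactly as in your solvent case) to get $E_i(\tilde{\pp})=E_i(\pp^*)=0$, with no appeal to maximality. Equivalently, the paper phrases it as: if $v_i$ is solvent after the deviation, then since paying in full is compatible with \emph{any} priority ordering, $v_i$ can already fully repay under $s_i$---contradicting the assumption that $v_i$ was in default under $\s$.
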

\begin{proof}
\mk{Assume otherwise that there exists a strategy profile which is not a Nash equilibrium.} Let $v_i$ be a firm that wishes to deviate and $s_i$ be its strategy. Clearly, if $v_i$ is solvent it can repay all its liabilities in full and, hence, the payment priorities are irrelevant. So, let us assume that $v_i$ is in default. If, by deviating, $v_i$ remains in default, then its equity remains $0$. Therefore, assume that $v_i$ deviates to another strategy $s'_i$ where it is solvent. In that case, however, $v_i$ could fully repay its liabilities regardless of the payment priorities and, therefore, it can still repay its liabilities when playing $s_i$; a contradiction. 
\end{proof}

Note that Lemma \ref{lem:same-equity} no longer holds once default costs are introduced; see e.g., Example 3.3 in \cite{rogers2013failure} where both firms, each having a singleton strategy set, may be in default or solvent depending on the clearing payments. The next result extends Theorem 7 in \cite{wattenhofer2020network} to the setting with default costs, and guarantees the existence of Nash equilibria (and actually strong ones) when firms wish to maximize their equity.

\begin{theorem}\label{thm:equity-maximal-negative}
Even with default costs and negative external assets, any strategy profile is a strong equilibrium when firms aim to maximize their equity.
\end{theorem}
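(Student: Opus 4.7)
The plan is to proceed by contradiction. Assume there exists a coalition $C$ and a strategy profile $\s'$ differing from $\s$ only on firms in $C$ such that $E_i(\mathbf{P}')>E_i(\mathbf{P})$ for every $v_i\in C$, where $\mathbf{P}$ and $\mathbf{P}'$ are the maximal (proper) clearing payments under $\s$ and $\s'$, respectively. A first easy observation is that every $v_i\in C$ must be solvent under $\mathbf{P}'$; otherwise $E_i(\mathbf{P}')=0$ could not strictly exceed the non-negative $E_i(\mathbf{P})$.

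The core step is to argue that $\mathbf{P}'$ is itself a valid proper clearing of the original profile $\s$. Since Equation (\ref{eq:fixed-point}) is strategy-independent, the fixed-point equality $\mathbf{P}'=\Phi(\mathbf{P}')$ holds under $\s$ as well. For firms outside $C$ the priorities are unchanged, so Equation (\ref{eq:pp-payments}) is automatic. For any $v_i\in C$, solvency under $\mathbf{P}'$ yields $p'_i=L_i$; plugging this into Equation (\ref{eq:pp-payments}), and noting that $L_i=\sum_m L_i^{(m)}$, shows that every priority class is fully funded and hence $p'_{ij}=l_{ij}$ for every creditor $v_j$, \emph{regardless of the chosen priority ordering}. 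Consequently the same payment matrix also satisfies Equation (\ref{eq:pp-payments}) under the original priorities $s_i$. Properness is a property of the payment matrix itself, not of the strategies, so it is preserved when interpreting $\mathbf{P}'$ under $\s$.

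Having established that $\mathbf{P}'$ is a proper clearing of $\s$, maximality of $\mathbf{P}$ gives $\mathbf{P}\geq \mathbf{P}'$ pointwise. Hence, for each $v_i\in C$, the total incoming payments satisfy $\sum_{j\in[n]} p_{ji}\geq\sum_{j\in[n]} p'_{ji}$. Since $v_i$ is solvent under $\mathbf{P}'$, i.e., $e_i+\sum_{j\in[n]} p'_{ji}\geq L_i$, it remains solvent under $\mathbf{P}$, so
\[
E_i(\mathbf{P})=e_i+\sum_{j\in[n]} p_{ji}-L_i\geq e_i+\sum_{j\in[n]} p'_{ji}-L_i=E_i(\mathbf{P}'),
\]
contradicting the strict improvement assumed for $v_i$.

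The main obstacle I anticipate is the verification that $\mathbf{P}'$ is a valid clearing under $\s$ (and in particular under the original priorities of coalition members), since Lemma \ref{lem:same-equity} is no longer available once default costs are allowed. The key trick is that solvent firms automatically satisfy the priority-proportional constraint for \emph{any} priority profile, which lets us reuse the same payment matrix across the two strategy profiles. Default costs and negative external assets only enter Equation (\ref{eq:fixed-point}), which is strategy-independent, so they do not interfere with this cross-profile comparison.
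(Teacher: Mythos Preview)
Your proof is correct and follows essentially the same approach as the paper: observe that every coalition member must be solvent under $\mathbf{P}'$, conclude that $\mathbf{P}'$ is therefore a valid (proper) clearing for the original profile $\s$ (since solvent firms pay $l_{ij}$ regardless of priorities), and derive a contradiction from the maximality of $\mathbf{P}$. One cosmetic difference: the paper first reduces negative external assets to the non-negative case via an auxiliary sink firm, whereas you argue directly that Equations~(\ref{eq:fixed-point}) and~(\ref{eq:pp-payments}) go through unchanged; your direct route is cleaner and shows the reduction is unnecessary, and your final step (using $\mathbf{P}\geq\mathbf{P}'$ to compare equities) is spelled out more explicitly than the paper's somewhat terse concluding sentence.
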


\begin{proof}
We begin by transforming an instance $\mathcal{I}$ with negative external assets into another instance ${\mathcal I}'$ without negative external assets, albeit with a slightly restricted strategy space for each firm. In particular, we add an auxiliary firm $t$ and define liabilities and assets as follows. For any pair of firms $v_i$, $v_j$ which does not include $t$, we set $l'^0_{ij}=l^0_{ij}$. For each firm $v_i$ with $e_i<0$, we set $e'_i=0$ and set liability $l'^0_{it} = e_i$, while for any other firm $i'$ we set $e'_{i'} = e_{i'}$ and $l'^0_{i't}=0$. Furthermore, we restrict the strategy space of each firm in ${\mathcal I}'$ so that their topmost priority class includes only firm $t$. An example of this process is shown in  Figure~\ref{fig:negative assets}.
\begin{figure}[h]
\centering
\includegraphics[height=2cm]{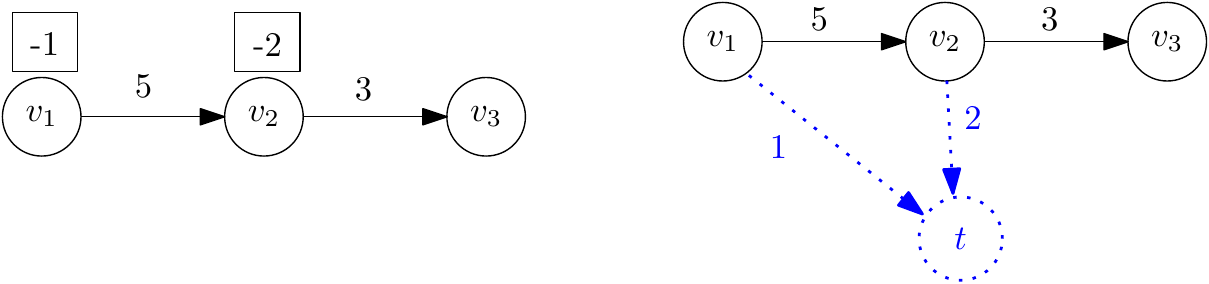}
\caption{Transforming an instance with negative external assets into an instance without negative external assets.}\label{fig:negative assets}
\end{figure}

Given a strategy profile for instance ${\mathcal I}$, we create the corresponding strategy profile for instance ${\mathcal I}'$ by having firm $t$ as the single topmost priority creditor and, then, append the initial strategy profile. It holds that, for any given strategy profile, the maximal clearing payments for instance ${\mathcal I}$ corresponds to  maximal clearing payments in ${\mathcal I}'$ for the new strategy profile; this can be easily proved by contradiction. We can now proceed with the proof by assuming non-negative external assets, without loss of generality. 

Consider maximal clearing payments $\mathbf{P}^*$  and the associated strategy profile $\mathbf{s}$. Let us assume that there is a coalition of firms $C=\{v_{C_1},\ldots,v_{C_k}\}$ where each member of the coalition can strictly increase its equity after a joint deviation. In particular, let $v_{C_i}$, for $i=1,\ldots,k$ change its strategy from $s_{C_i}$ to $s'_{C_i}$.  Clearly, each $v_{C_i}$ must have a strictly positive equity in the resulting new  maximal clearing payments $\mathbf{P}'$, since its equity was $0$ before. But then, each $v_{C_i}$  should remain solvent under strategy $s_{C_i}$ as well, and therefore $v_{C_i}$'s actual payment priorities are irrelevant, for $i=1,\ldots,k$. Note that $\mathbf{P}'$ should also be maximal clearing payments under the initial strategy profile; a contradiction to the maximality of $\mathbf{P}^*$.
\end{proof}

\subsection{(In)efficiency of equilibria}
 We start by noting that Lemma \ref{lem:same-equity} together with Theorem \ref{thm:equity-all-clearing} imply the following, which we note holds for any payment scheme.
\begin{corollary}
The price of anarchy in financial network games with CDS contracts is $1$ when firms aim to maximize their equity.
\end{corollary}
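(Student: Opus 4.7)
The plan is to combine Theorem \ref{thm:equity-all-clearing} with an algebraic identity implicit in the proof of Lemma \ref{lem:same-equity}. The key observation is that, in the absence of default costs, the total equity $\sum_i E_i(\mathbf{P})$ is a constant of the instance, independent of the chosen strategy profile and of which clearing payments are realized.

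To establish this invariance, I would recall the identity $E_i(\mathbf{P}) = e_i + \sum_{j\in[n]} p_{ji} - \sum_{j\in[n]} p_{ij}$ derived in the proof of Lemma \ref{lem:same-equity}, which holds because each firm either fully repays its liabilities (if solvent) or exhausts all of its assets (if in default). This identity depends only on the clearing payments and the external assets; it makes no reference to priority orderings nor to whether a particular liability originates from a debt contract or from a CDS (once recovery rates are determined by the realized clearing payments, each $l_{ij}$ is fixed). Summing over $i$ and reindexing the double sum yields $\sw(\mathbf{P}) = \sum_{i\in[n]} E_i(\mathbf{P}) = \sum_{i\in[n]} e_i$ for every clearing payments $\mathbf{P}$ consistent with any strategy profile, including profiles involving CDS contracts.

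Next, by Theorem \ref{thm:equity-all-clearing}, every strategy profile is a Nash equilibrium when firms aim to maximize their equity. Combined with the constancy just established, the social welfare at every equilibrium---and hence also the optimal social welfare $\opt$---equals $\sum_{i\in[n]} e_i$. Plugging into the definition $\poa = \max_{\pp \in \mathbf{P_{eq}}}\frac{\opt}{\sw(\mathbf{P})}$ gives $\poa \leq 1$, and the matching lower bound is trivial. Lemma \ref{lem:same-equity} guarantees that $\sw(\mathbf{P})$ is well-defined independently of which consistent clearing payments one picks for a given strategy profile.

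I expect no substantial obstacle: the only subtle point is recognizing that the identity in Lemma \ref{lem:same-equity} is purely algebraic in the payment matrix and therefore extends uniformly across the entire strategy space, which is precisely what validates the preceding remark that the result ``holds for any payment scheme.''
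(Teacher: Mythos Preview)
Your proposal is correct and follows the same approach the paper intends: the paper simply remarks that the corollary follows from Lemma~\ref{lem:same-equity} together with Theorem~\ref{thm:equity-all-clearing}, and you have correctly fleshed out the implicit step by extracting the identity $E_i(\mathbf{P}) = e_i + \sum_j p_{ji} - \sum_j p_{ij}$ from the proof of Lemma~\ref{lem:same-equity} and summing over all firms to obtain $\sw(\mathbf{P}) = \sum_i e_i$ for every clearing payment matrix. The only minor remark is that your final appeal to Lemma~\ref{lem:same-equity} for well-definedness is redundant once you have shown $\sw(\mathbf{P})$ is constant across \emph{all} clearing payments, but this does not affect correctness.
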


The above positive result, however, no longer holds when default costs or negative external assets exist. For these cases we derive the following results.


\begin{theorem}\label{thm:equity-default-costs}
The price of anarchy with default costs $(\alpha, \beta)$ when firms aim to maximize their equity, is
\begin{enumerate}[(i)]
\item $1$ when $\alpha=\beta=0$,
\item unbounded when: i) $\beta\in (0,1)$, ii) $\beta=0$ and $\alpha\in (0,1]$, or iii) $\beta=1$ and $\alpha=0$,
\item at least $1/\alpha -\varepsilon$, if $\beta=1$ and $\alpha\in (0,1)$ for \mk{any} $\varepsilon>0$.
\end{enumerate}
\end{theorem}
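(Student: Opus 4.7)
My plan is to treat the three parts separately, exploiting the crucial fact from Theorem~\ref{thm:equity-maximal-negative} that every strategy profile is a (strong) equilibrium when firms maximize equity. Consequently, the price of anarchy collapses to the ratio between the optimal social welfare and the worst social welfare attainable across all strategy profiles of an instance, so the task becomes one of constructing networks whose strategy profiles yield starkly different social welfares.

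For part~(i), I would simply observe that when $\alpha=\beta=0$, the clearing equation~(\ref{eq:fixed-point}) forces every defaulting firm to pay $\alpha e_i+\beta\sum_j x_{ji}=0$ regardless of priorities, while every solvent firm pays all its liabilities in full, also independently of priorities. Hence all strategy profiles induce identical clearing payments and identical social welfare, so $\poa=1$.

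For part~(ii), I would exhibit one construction per subcase. For (ii.a) with $\beta\in(0,1)$, use a solvent source $v_0$ pushing one unit into a defaulting firm $v_1$ (with $e_1=0$) that has two creditors: a direct sink $w$ and the head of a long chain $v_2\to\cdots\to v_k$ of defaulting firms with zero external assets and large pairwise liabilities. Strategy $(w\mid v_2)$ sends welfare $\beta$ to $w$, while $(v_2\mid w)$ attenuates by a factor $\beta$ at each hop of the chain, so welfare decays like $\beta^k$ and the ratio diverges as $k\to\infty$. For (ii.b) with $\beta=0$, since incoming to a defaulting firm is annihilated, take $v_1$ with $e_1=1$ in default, owing $1$ to a sink $w$ and $M$ (large) to a dead-end defaulting firm $v_2$: $v_1$ pays $\alpha$ in total, and the strategy $(w\mid v_2)$ delivers $\alpha$ to $w$, while the proportional strategy delivers only $\alpha/(M+1)$ to $w$ (the rest being absorbed by $v_2$), so the ratio tends to infinity with $M$. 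For (ii.c) with $\beta=1$ and $\alpha=0$, the only source of loss is external assets of defaulting firms, so I would use $v_1$ with $e_1=M$ owing $M+\varepsilon$ to $v_2$, and $v_2$ with $e_2=0$ owing $\varepsilon$ to $v_1$ and $M+\varepsilon$ to a sink $v_3$. Under $v_2$'s strategy $(v_1\mid v_3)$, the $\varepsilon$ feedback renders $v_1$ solvent and welfare $M$ reaches $v_3$; under $(v_3\mid v_1)$ the feedback is preempted, $v_1$ defaults, its entire $e_1=M$ is annihilated by $\alpha=0$, and the welfare collapses to $0$, so the ratio $M/0$ is unbounded.

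For part~(iii), I would use the same feedback-loop construction as in (ii.c), now calibrated to $\alpha\in(0,1)$: let $v_1$ have $e_1=1$ owing $1+\varepsilon$ to $v_2$, and let $v_2$ have $e_2=0$ owing $\varepsilon$ back to $v_1$ and $M$ to a sink $v_3$. Under $v_2$'s strategy $(v_1\mid v_3)$ the $\varepsilon$ feedback keeps $v_1$ solvent, so one unit is passed through to $v_3$ giving welfare $1$; under $(v_3\mid v_1)$ the feedback is blocked, $v_1$ defaults and pays only $\alpha e_1=\alpha$, which the defaulting $v_2$ forwards entirely to $v_3$, giving welfare $\alpha$. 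Hence $\poa\geq 1/\alpha$, which meets the desired bound $1/\alpha-\varepsilon$ for any $\varepsilon>0$.

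The main obstacle will be (ii.c) and, correspondingly, (iii): when $\beta=1$ there is no per-hop attenuation of payments, and the only destructible quantity is a defaulting firm's external asset. The construction must therefore engineer a subtle back-payment whose presence or absence flips the solvency status of a firm holding a large external asset, so that this external is fully preserved in one equilibrium and entirely annihilated in the other.
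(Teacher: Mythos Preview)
Your overall strategy---invoking Theorem~\ref{thm:equity-maximal-negative} to reduce the price of anarchy to the ratio of best to worst social welfare over all strategy profiles---is exactly the paper's, and your argument for part~(i) is essentially identical (the paper phrases it via sequential single-firm deviations, you via the observation that with $\alpha=\beta=0$ the clearing map~(\ref{eq:fixed-point}) is strategy-independent; both are valid).

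For parts~(ii) and~(iii), your constructions are genuinely different from, and simpler than, the paper's. The paper handles $\beta\in(0,1)$ with a four-firm cycle and the remaining subcases with a single six-firm gadget; you instead use a linear attenuation chain for $\beta\in(0,1)$, a two-firm absorption sink for $\beta=0$, and a three-firm feedback loop for $\beta=1$. Your feedback-loop construction in~(iii) is particularly nice: because it toggles the solvency of the \emph{only} firm with positive external assets, the identity $E(\pp)=\sum_i e_i-(1-\alpha)\sum_{i\in D(\pp)}e_i$ from Theorem~\ref{thm:poa-upper-beta-1} immediately gives $\poa=1/\alpha$ exactly, which is tight and in fact slightly sharper than the paper's $1/\alpha-\varepsilon$. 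Two small points need tightening before the constructions are complete: in~(ii.a) your chain must terminate at a solvent sink (if $v_k$ is itself ``defaulting'' as you write, you must specify its creditor, otherwise $L_k=0$ forces it to be solvent), and in~(ii.b) the ``dead-end defaulting firm $v_2$'' must be given an explicit outgoing liability so that it is actually in default---otherwise $v_2$ is solvent, its equity equals whatever it receives, and the two strategy profiles yield the same social welfare $\alpha$. Both are easy fixes and do not affect the argument.
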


\begin{proof}
We begin with the case $\alpha=\beta=0$. We claim that all strategy profiles correspond to the same clearing payments hence admit the same social welfare. 
It suffices to observe that neither the strategy of a solvent firm, nor the strategy of a firm in default, affect the set of firms in default and consequently the clearing payments.  Consider two strategy profiles $\mathbf{s}$ and $\mathbf{t}$. A firm $i$ that is solvent under $\mathbf{s}$, will be solvent and continue to make payments $p_i=L_i$ under any strategy (given the strategies of everyone else), i.e., will be solvent at the clearing payments consistent with strategy vector $(t_i,\mathbf{s}_{-i})$, derived by $\mathbf{s}$ if firm $i$ alone changes her strategy from $s_i$  to $t_i$. On the other hand, a firm $i$ that is in default under $\mathbf{s}$, will similarly remain in default \mk{under} $(t_i,\mathbf{s}_{-i})$ and continue to make $0$ payments since $\alpha=\beta=0$, thus not affecting the set of firms in default. The claim follows by considering the individual deviations from $s_i$ to $t_i$ of all firms $i$ sequentially and observing that the set of firms in default and, hence, the clearing payments are unaffected at each step. 

\begin{figure}[htbp]
\centering
\includegraphics[width=7cm]{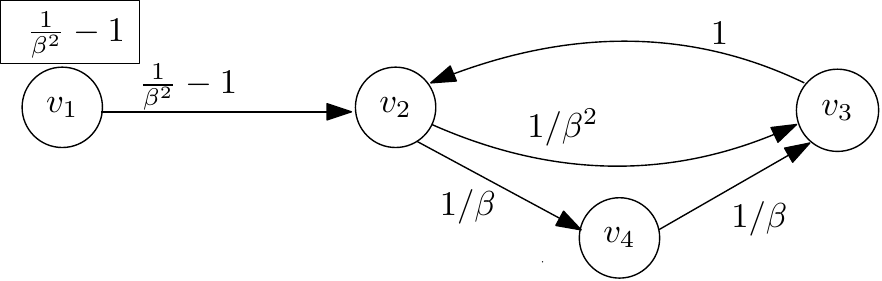}
\caption{The instance in the proof of Theorem \ref{thm:equity-default-costs}, where we assume $\beta \in (0,1)$.}
\label{fig:default cost}
\end{figure}
Regarding the case $\beta \in (0,1)$, consider the financial network in Figure \ref{fig:default cost}. Clearly, only firm $v_2$ can strategize and observe that it is always in default irrespective of its strategy. When $v_2$ selects strategy $s_2 = (v_4|v_3)$, we obtain the clearing payments $\mathbf{p}_1=(0, \frac{1}{\beta^2}-1, 0, 0), \mathbf{p}_2=(0, 0, 0, \frac{1+\beta}{\beta(\beta^2+\beta+1)}), \mathbf{p}_3=(0, \frac{\beta(1+\beta)}{\beta^2+\beta+1},  0, 0)$, and $\mathbf{p}_4=(0, 0, \frac{1+\beta}{\beta^2+\beta+1},0)$.  Notice that $v_2$, $v_3$ and $v_4$ are in default and we have $\sw(\pp)=0$.

When, however, $v_2$ selects strategy $s'_2 = (v_3|v_4)$, we obtain the clearing payments $ \mathbf{p}'_1=(0, \frac{1}{\beta^2}-1, 0, 0), \mathbf{p}'_2=(0, 0, \frac{1}{\beta}, 0), \mathbf{p}'_3=(0, 1, 0, 0)$, and $\mathbf{p}'_4=(0, 0, 0, 0)$. Now, $v_2$ and $v_4$ are in default and we have $\sw(\pp') = 1/\beta-1$. Since $\opt\geq \sw(\pp')$, the claim follows.

Regarding the case $\beta=0$ and $\alpha\in (0,1]$, consider the financial network in Figure \ref{fig:Thm10} and note that $v_2$ is always in default irrespective of its strategy. If $v_2$ selects strategy $s_2 = (v_6|v_3)$, then $v_6$ ends up having equity $\alpha$. However, strategy $(v_3|v_6)$ is also an equilibrium strategy for $v_2$, but it results in each firm having equity $0$.

\begin{figure}[h]
	\centering
	\includegraphics[width=8cm]{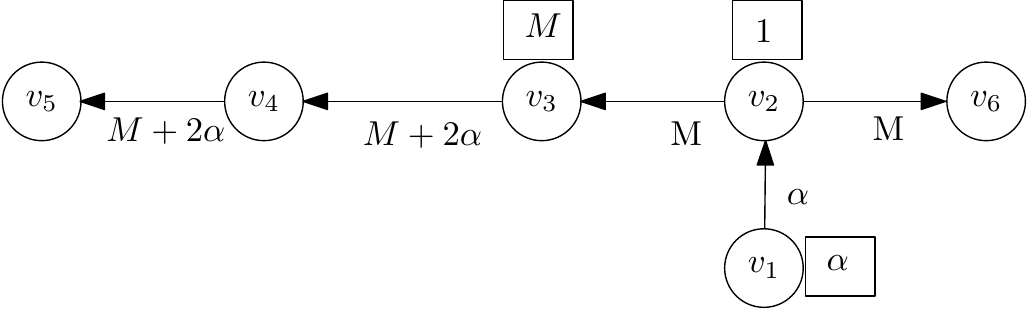}
	\caption{The instance in the proof of Theorem \ref{thm:equity-default-costs}, when $\beta \in \{0,1\} $ and $\alpha\neq \beta$.}
	\label{fig:Thm10}
\end{figure}

Regarding the remaining case, i.e., $\beta=1$ and $\alpha\in [0,1)$, consider again the financial network in Figure \ref{fig:Thm10}. As before, $v_2$ is always in default. If $v_2$ selects strategy $(v_3|v_6)$, then $v_5$ ends up having equity $M+2\alpha$. However, $(v_6|v_3)$ is also an equilibrium strategy for $v_2$, but it results in equity $2\alpha$ for $v_6$, equity $\alpha M$ for $v_5$ and equity $0$ for the remaining firms. The theorem follows by straightforward calculations.
\end{proof}

We complement this result with a tight upper bound for the case $\beta=1$. 
\begin{theorem}\label{thm:poa-upper-beta-1}
The price of anarchy with default costs $(\alpha, 1)$ when firms aim to maximize their equity is at most $1/\alpha$. \mk{This holds even if the clearing
payments that are realized are not maximal.}
\end{theorem}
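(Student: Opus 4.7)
The plan is to exploit a conservation-of-money identity that is particularly clean when $\beta=1$. For any clearing payments $\mathbf{P}$ and any firm $v_i$, we have $e_i + \sum_j p_{ji} - \sum_j p_{ij}$ equal to $E_i(\mathbf{P})$ when $v_i$ is solvent (since then $\sum_j p_{ij}=L_i$), and equal to $(1-\alpha)e_i$ when $v_i$ is in default (since $\beta=1$ gives $\sum_j p_{ij}=\alpha e_i+\sum_j p_{ji}$). Summing over all firms, the payment terms $p_{ji}$ and $p_{ij}$ cancel, so $\sum_i e_i = \sum_{v_i\text{ solvent}} E_i(\mathbf{P}) + (1-\alpha)\sum_{v_i\text{ in default}} e_i$.

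Rearranging, the first step of the proof is to establish the identity
\begin{equation*}
\sw(\mathbf{P}) \;=\; \sum_i E_i(\mathbf{P}) \;=\; \sum_{v_i\text{ solvent}} e_i \;+\; \alpha\sum_{v_i\text{ in default}} e_i,
\end{equation*}
valid for any clearing payments $\mathbf{P}$. Since external assets are non-negative (this theorem does not assume negative externals) and $\alpha\in(0,1]$, this immediately gives $\sw(\mathbf{P})\geq \alpha\sum_i e_i$. Applying the same identity to the optimal clearing payments yields $\opt\leq \sum_i e_i$.

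The second step is simply to put the two bounds together: for any strategy profile $\mathbf{s}$ (which by Theorem \ref{thm:equity-maximal-negative} is an equilibrium) and any clearing payments $\mathbf{P}$ consistent with it,
\begin{equation*}
\frac{\opt}{\sw(\mathbf{P})} \;\leq\; \frac{\sum_i e_i}{\alpha\sum_i e_i} \;=\; \frac{1}{\alpha}.
\end{equation*}
Note that because the identity above applies to every clearing payment (not just the maximal one), the bound also carries through when non-maximal clearing payments are realized, which is what the theorem asserts. The edge case $\sum_i e_i=0$ can be handled separately: then $\opt=0$ and the statement is trivial.

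I do not anticipate a genuine obstacle here; the only subtle point is verifying the default-firm accounting $(1-\alpha)e_i$, which relies crucially on $\beta=1$ (any $\beta<1$ would introduce a leak proportional to incoming payments, breaking the clean ``externals-only'' form of the identity and explaining why the unbounded lower bounds of Theorem \ref{thm:equity-default-costs} arise for $\beta\in(0,1)$). Care should also be taken that properness of clearing payments is preserved under the identity, but since properness only removes spurious cycles of payments that cancel pairwise in $\sum_j(p_{ji}-p_{ij})$, it does not affect the accounting.
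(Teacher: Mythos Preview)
Your proposal is correct and follows essentially the same approach as the paper: both derive the identity $\sw(\mathbf{P}) = \sum_{v_i\text{ solvent}} e_i + \alpha\sum_{v_i\text{ in default}} e_i$ via the accounting $e_i+\sum_j p_{ji}-\sum_j p_{ij}$ (which equals $E_i$ for solvent firms and $(1-\alpha)e_i$ for firms in default when $\beta=1$), and then bound $\opt$ above by $\sum_i e_i$ and $\sw(\mathbf{P})$ below by $\alpha\sum_i e_i$. The only cosmetic difference is that the paper writes the in-default contribution as $\alpha e_i+\sum_j p_{ji}-\sum_j p_{ij}$ (which equals $0$) rather than isolating the $(1-\alpha)e_i$ leak as you do.
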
 
\begin{proof}
Consider any clearing payments $\mathbf{P}$ and let $S(\mathbf{P})$ and $D(\mathbf{P})$ be the set of solvent and in default firms under $\mathbf{P}$. Recall that for any firm $v_i$, we have $E_i(\mathbf{P}) = \max\{0,a_i(\pp)-L_i\}$. For any firm $v_i \in S(\mathbf{P})$, it holds that $E_i(\mathbf{P}) = e_i+\sum_{j\in[n]}{p_{ji}} - \sum_{j\in[n]}{p_{ij}}$, as $\sum_{j\in [n]}{p_{ij}}=\min\{e_i+\sum_{j\in [n]}{p_{ji}}, L_i\}$. Similarly, for any $v_i \in D(\mathbf{P})$, we have $E_i(\mathbf{P}) = \alpha e_i + \sum_{j\in[n]}{p_{ji}} - \sum_{j \in [n]}{p_{ij}}$, since $\beta = 1$. By summing over all firms, we get
\begin{align*}
E(\mathbf{P}) &= \sum_{i: v_i \in S(\mathbf{P})}{E_i(\mathbf{P})} + \sum_{i: v_i \in D(\mathbf{P})}{E_i(\mathbf{P})}\\
&= \sum_{i\in[n]}{\left(e_i +\sum_{j\in [n]}{p_{ji}} - \sum_{j\in [n]}{p_{ij}}\right)} - (1-\alpha)\sum_{i: v_i \in D(\mathbf{P})}{e_i}\\
&= \sum_{i\in[n]}{e_i} - (1-\alpha)\sum_{i: v_i \in D(\mathbf{P})}{e_i}.
\end{align*} 
Clearly, the social welfare is maximized when $D(\mathbf{P}) = \emptyset$, i.e., $OPT\leq \sum_{i\in [n]}{e_i}$, while it is minimized when $D(\mathbf{P})$ includes all firms, i.e., $E(\mathbf{P}) \geq \alpha\sum_{i \in [n]}{e_i}$; this completes the proof.
\end{proof}

Extending the setting to allow for negative external assets again may lead to unbounded social welfare loss.
\begin{theorem}\label{thm:equity-negative}
The price of anarchy is unbounded when negative external assets are allowed and firms aim to maximize their equity.
\end{theorem}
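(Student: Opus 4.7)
The plan is to exploit Theorem~\ref{thm:equity-maximal-negative}: since every strategy profile is a (strong) Nash equilibrium in this setting, I only need to exhibit one instance in which two different strategy profiles yield equilibrium social welfares whose ratio grows without bound. The construction will feature a single strategic firm $v_1$ with external assets $e_1 = M$ that owes $M$ coins to each of two creditors: a ``sink'' firm $v_2$ with strongly negative external assets $e_2 = -M^2$ and no outgoing liabilities, and a firm $v_3$ with $e_3 = 0$ and no outgoing liabilities. To keep the ratio well defined (avoiding a zero denominator), I will also include an isolated passive firm $v_0$ with $e_0 = 1$ and no liabilities, which contributes a constant $1$ to the social welfare in every outcome.

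First I would analyze the priority strategy $s_1 = (v_2 \mid v_3)$. Here Algorithm~\ref{alg:MCV} produces maximal clearing payments in which $v_1$ sends its entire $M$ coins to $v_2$; these funds are fully absorbed by $v_2$'s negative external assets, leaving $v_2$ with net assets $M - M^2 < 0$ and hence equity $0$. Firm $v_3$ receives nothing and $v_1$ is in default, so the total equity is $1$, contributed entirely by $v_0$. I would then compare against the alternative strategy $s'_1 = (v_3 \mid v_2)$, under which $v_1$ first fully repays $v_3$ (granting it equity $M$) and nothing remains for $v_2$, whose equity is again $0$. The total equity in this outcome is $M + 1$.

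Both strategy profiles are Nash equilibria by Theorem~\ref{thm:equity-maximal-negative}, so the price of anarchy of the instance is at least $M + 1$; letting $M \to \infty$ yields the claim. The only subtlety I anticipate is handling the fixed-point equation of Section~\ref{sec:preliminaries} in the presence of negative external assets, but this is mild here: $v_2$ and $v_3$ are leaves (no outgoing liabilities, no cycles, no CDS contracts), so Algorithm~\ref{alg:MCV} collapses to a single priority-driven routing of $v_1$'s cash, and $v_2$'s equity is clipped to zero by the $\max\{0,\cdot\}$ in the equity definition regardless of how negative $e_2$ is. Thus the main obstacle is purely one of careful bookkeeping on the clearing-payment side rather than any combinatorial difficulty.
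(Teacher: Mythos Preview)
Your proposal is correct and follows essentially the same approach as the paper: a single strategic firm that is always in default chooses between a creditor with (strongly) negative external assets that absorb any incoming payment and a creditor whose equity increases dollar-for-dollar, so the two equilibrium profiles yield vastly different social welfare. The paper's construction is slightly leaner (three firms, unit liabilities, and a bad equilibrium with social welfare exactly $0$), whereas you parametrize by $M$ and add an isolated firm $v_0$ to keep the denominator positive; this is a cosmetic difference, not a different route.
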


\begin{proof}
Consider the financial network in Figure \ref{fig:negative}. Clearly, only $v_1$ can strategize but, irrespective of its strategy, $v_1$ is always in default and $E_1=0$; hence, any strategy profile is a Nash equilibrium. When $v_1$ selects strategy $s_1 = (v_2|v_3)$ we obtain the clearing payments $\mathbf{p}_1=(0, 1, 0), \mathbf{p}_2=(0, 0, 0), \mathbf{p}_3=(0, 0, 0)$. However, when $v_1$ selects strategy $s'_1 = (v_3|v_2)$ we obtain the clearing payments $\mathbf{p}'_1=(0, 0, 1), \mathbf{p}'_2=(0, 0, 0), \mathbf{p}'_3=(0, 0, 0)$ 
with $\sw(\mathbf{P}') = 1$. Since $\opt\geq \sw(\mathbf{P}')$, the claim follows.
\begin{figure}[htbp]
\centering
\includegraphics[width=4cm]{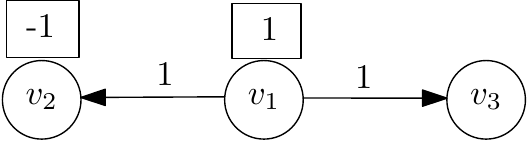}
\caption{A financial network with negative external assets and unbounded price of anarchy.}
\label{fig:negative}
\end{figure}
\end{proof}

Still, in the presence of default costs or negative external assets, Theorem \ref{thm:equity-maximal-negative} leads to the next positive result.
\begin{corollary}
The strong price of stability is $1$ even with default costs  and negative external assets, when firms aim to maximize their equity.
\end{corollary}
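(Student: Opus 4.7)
The plan is to observe that this corollary is an almost immediate consequence of Theorem \ref{thm:equity-maximal-negative}. Recall that the strong price of stability of a game is defined as the ratio of the optimal social welfare over the social welfare of the best strong equilibrium, i.e., $\min_{\mathbf{s} \in \mathbf{s}_{SE}} \frac{\opt}{\sw(\pp(\mathbf{s}))}$, where $\mathbf{s}_{SE}$ is the set of strong equilibria. Since by Theorem \ref{thm:equity-maximal-negative} every strategy profile is a strong equilibrium (even in the presence of default costs and negative external assets), the set $\mathbf{s}_{SE}$ coincides with the set of all strategy profiles.

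First, I would fix any strategy profile $\mathbf{s}^*$ that realizes the optimal social welfare $\opt$; such a profile exists because the firms' strategy spaces are finite. Then I would invoke Theorem \ref{thm:equity-maximal-negative} to conclude that $\mathbf{s}^*$ is itself a strong equilibrium. Consequently, there exists a strong equilibrium achieving social welfare equal to $\opt$, so the minimum in the definition of the strong price of stability is attained with ratio $1$, yielding $\pos_{\text{strong}} = 1$.

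The argument does not require any calculation, and there is no real obstacle beyond pointing out that the optimum is itself an equilibrium. One minor subtlety worth stressing is that in the equity setting with default costs, Lemma \ref{lem:same-equity} no longer guarantees that all clearing payments induce the same equities; however, Theorem \ref{thm:equity-maximal-negative} is stated with respect to maximal clearing payments (which are unique by Theorem \ref{thm:maxcv-algorithm} and our convention), so the welfare of the optimal profile is well-defined and coincides with the welfare at the strong equilibrium induced by $\mathbf{s}^*$. This completes the proof.
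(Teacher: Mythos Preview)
Your proposal is correct and matches the paper's own reasoning: the corollary is stated immediately after Theorem~\ref{thm:equity-maximal-negative} with the remark that it ``leads to'' this result, and your argument---that since every strategy profile is a strong equilibrium, in particular any welfare-optimal profile is one---is exactly the intended one-line justification. The additional remarks you include about finiteness of the strategy space and the use of maximal clearing payments are sound and only make the implicit argument more explicit.
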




If we relax the stability notion and consider the price of stability in super-strong equilibria, where a coalition of firms deviates if at least one strictly improves its utility and no firm suffers a decrease in utility, then we obtain a negative result.

\begin{theorem}\label{thm:superstrong}
The price of stability in super-strong equilibria is unbounded when negative external assets are allowed and firms aim to maximize their equity.
\end{theorem}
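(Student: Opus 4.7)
The plan is to construct a financial network, parameterized by an arbitrarily large integer $M$, in which the optimum social welfare is $\Theta(M)$ but every super-strong equilibrium attains only constant social welfare, so that the ratio $\opt/\sw(\pp)$ is unbounded over super-strong equilibria as $M\to\infty$.

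I would begin from the two-branch instance used in the proof of Theorem \ref{thm:equity-negative} (Figure \ref{fig:negative}) and amplify it so that $\opt$ scales with $M$. Specifically, the strategic firm $v_1$ (with small positive external assets) has two creditors: an \emph{absorber} $v_2$ with strongly negative external assets so that any payment towards $v_2$ contributes nothing to equity, and a \emph{productive} creditor $v_3$ placed at the head of a short liability pipeline terminating at a firm $v_T$ whose own external assets are released into equity (either at $v_T$ or along the pipeline) only when $v_T$ becomes solvent. Calibrating the liabilities so that any positive payment to $v_3$ triggers solvency along the chain, the strategy of $v_1$ that prioritizes $v_3$ achieves $\sw = \Theta(M) = \opt$, while prioritizing $v_2$ yields $\sw = 0$.

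To break the optimum as a super-strong equilibrium, I would attach a small auxiliary creditor $v_*$ of $v_1$ (as a separate priority class, or hanging off the absorber branch in such a way that a tiny payment reaches $v_*$ only when $v_2$ is activated). The key observation is that $v_1$ is in default under \emph{every} profile, so $E_1=0$ identically; in particular $v_1$ is indifferent and is willing to join any coalition. The coalition $\{v_1, v_*\}$ can then deviate from the optimum to a strategy that directs a small positive payment to $v_*$: $v_*$ strictly improves from $0$ to a positive equity, while $v_1$'s equity is unchanged. This is a valid super-strong deviation, so the optimum is not super-strong stable.

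Finally I would argue, via a case analysis over the finitely many priority strategies of $v_1$, that every strategy profile producing non-constant $\sw$ inherits an analogous Pareto-improving coalition (exploiting the indifference of $v_1$ and the absorber $v_2$), while the remaining profiles have $\sw$ bounded by a constant independent of $M$. Hence any super-strong equilibrium lies in the low-$\sw$ regime and the price of stability is unbounded. The main obstacle will be the calibration of $v_*$ and its attachment so that simultaneously (i) diverting a payment towards $v_*$ leaves $v_1$ genuinely indifferent, (ii) the diversion actually costs the high-welfare profile a non-trivial share of its productive flow, and (iii) no alternative profile slips through the case analysis and becomes a super-strong equilibrium with $\sw$ scaling in $M$.
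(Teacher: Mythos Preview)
Your plan has a structural gap. The mechanism you use to destabilise the optimum --- pairing the permanently-indifferent $v_1$ (equity identically $0$) with a beneficiary $v_*$ --- does not single out high-welfare profiles. From \emph{any} profile $v_1$ is willing to join whichever firm would gain from redirecting $v_1$'s flow; in particular, from your intended low-welfare profiles the firm that receives the $\Theta(M)$ equity on the productive branch can recruit $v_1$ in exactly the same way and deviate back to the optimum. Hence either no super-strong equilibrium exists at all, or, if one adopts the paper's implicit convention that only firms with a nontrivial strategy set may sit in a coalition, then $v_*$ and the chain firms are all excluded, $v_1$ is the unique strategic player, and \emph{every} profile --- including the optimum --- is a super-strong equilibrium, giving $\pos=1$. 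Either way the argument collapses; your own acknowledged ``main obstacle'' (iii) is in fact fatal without a new idea.

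The paper avoids this by making the second strategic firm not an indifferent absorber but an \emph{anchor}: $v_2$ has two creditors ($v_1$ and a sink $v_4$) and obtains strictly positive equity $\varepsilon$ precisely when $v_1$ prioritises $v_2$. Any change in $v_1$'s priority pushes $v_2$ into default, so every coalition containing the only other strategic firm refuses to deviate; this pins down a super-strong equilibrium with welfare $\varepsilon$, while the optimum (where $v_1$ prioritises $v_3$) has welfare $1-\varepsilon$. The unbounded ratio therefore comes from letting $\varepsilon\to 0$, not from an $M$-scaled pipeline. The ingredient you are missing is a strategic firm that \emph{strictly prefers} the inefficient profile, rather than an additional passive creditor.
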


\begin{proof}
\begin{figure}[h]
\centering
\includegraphics[height=2.8cm]{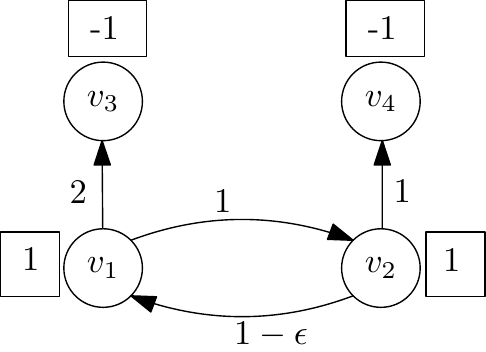}
\caption{A financial network with negative external assets and unbounded price of stability in super-strong equilibria.}
\label{fig:strong PoS}
\end{figure}

Consider the financial network as shown in Figure \ref{fig:strong PoS}, where $\varepsilon>0$ is a small constant. Observe that $v_1$ is always in default since even its maximal possible total asset $a_1=2-\varepsilon$ is still less than its total liabilities $L_1=3$, which means $E_1=0$ in any scenario. If $v_1$'s strategy is other than $s_1 = (v_2|v_3)$, then $v_2$ is always in default no matter what strategies it chooses, that is $E_2=0$. Thus, any strategy profile where $v_1$ does not prioritize the payment of $v_2$ is a Nash equilibrium with $E_1=E_2=0$. However, if $v_1$ and $v_2$ form a coalition, then the only superstrong equilibrium occurs when $v_1$ and $v_2$ prioritize the payment of each other, resulting in the clearing payments $\mathbf{p}_1=(0, 1, 1-\epsilon,0), \mathbf{p}_2=(1-\epsilon, 0, 0, 1), \mathbf{p}_3=\mathbf{p}_4=(0, 0, 0,0)$ with $\sw(\pp)=\varepsilon$. Furthermore, when $v_1$ follows the strategy $s'_1 = (v_3|v_2)$ and $v_2$ follows strategy $s'_2 = (v_1|v_4)$, we obtain the clearing payments $\mathbf{p}'_1=(0, 0, 2-\epsilon, 0), \mathbf{p}'_2=(1-\epsilon, 0, 0, \epsilon), \mathbf{p}'_3=\mathbf{p}'_4=(0, 0, 0,0)$ with $\sw(\mathbf{P}')=1-\varepsilon$. Since $\opt\geq \sw(\mathbf{P}')$, we obtain that the superstrong price of stability is at least $\frac{1-\varepsilon}{\varepsilon}$ and the theorem follows since $\varepsilon$ can be arbitrarily small.
\end{proof}

\section{Conclusions and discussion}\label{sec:conclusions}
We have studied strategic payment games in financial networks with priority-proportional payments and we have presented an almost full picture both with respect to structural properties of  clearing payments as well as the quality of Nash equilibria.

Our work reveals several interesting open questions. In particular, an interesting restriction on the strategy space is to always prioritize strategies that could lead to incoming payments, i.e., when each firm $i$ always ranks higher a payment that might create a cash flow through a directed cycle (in the liabilities graph) back to $i$, than a payment that does not. While some of our negative results would still hold, e.g., Theorem \ref{thm:pp-no-nash-equilibria}, others, such as Theorem \ref{thm:pp-pos-positivedefault}, crucially rely on the reverse ranking. Furthermore, one could consider addressing computational complexity questions like deciding whether a Nash equilibrium exists or computing a Nash equilibrium when it is guaranteed to exist, when firms aim to maximize their total assets.

\bibliographystyle{splncs04}
\bibliography{ref}

\begin{thebibliography}{10}
\providecommand{\url}[1]{\texttt{#1}}
\providecommand{\urlprefix}{URL }
\providecommand{\doi}[1]{https://doi.org/#1}

\bibitem{allouch2017strategic}
Allouch, N., Jalloul, M.: Strategic default in financial networks. Tech. rep.,
  Queen Mary University of London (2018)

\bibitem{Anshelevich-pos08}
Anshelevich, E., Dasgupta, A., Kleinberg, J., Tardos, E., Wexler, T.,
  Roughgarden, T.: The price of stability for network design with fair cost
  allocation. SIAM Journal on Computing  \textbf{38}(4),  1602–1623 (2008)

\bibitem{Priorities}
Ayer, E.J.D., Bernstein, M.L., Friedland, J.: Priorities. Issues and
  Information for the Insolvency Professional Journal  \textbf{11} (2004),
  \url{https://www.kirkland.com/siteFiles/kirkexp/publications/2392/Document1/Friedland_Priorities.pdf}

\bibitem{Hoefer19}
Bertschinger, N., Hoefer, M., Schmand, D.: Strategic payments in financial
  networks. In: Proceedings of the 11th Innovations in Theoretical Computer
  Science Conference (ITCS). pp. 46:1--46:16 (2020)

\bibitem{CH18}
Cs\'{o}ka, P., {Jean-Jacques Herings}, P.: Decentralized clearing in financial
  networks. Management Science  \textbf{64},  4471--4965 (2018)

\bibitem{CH19}
Cs\'{o}ka, P., {Jean-Jacques Herings}, P.: Liability games. Games and Economic
  Behavior  \textbf{116},  260--268 (2019)

\bibitem{Dem18}
Demange, G.: Contagion in financial networks: A threat index. Management
  Science  \textbf{64}(2),  955--970 (2018)

\bibitem{Eisenberg01}
Eisenberg, L., Noe, T.H.: Systemic risk in financial systems. Management
  Science  \textbf{47}(2),  236--249 (2001)

\bibitem{elliott2014financial}
Elliott, M., Golub, B., Jackson, M.O.: Financial networks and contagion.
  American Economic Review  \textbf{104}(10),  3115--53 (2014)

\bibitem{Elsinger09}
Elsinger, H.: {Financial Networks, Cross Holdings, and Limited Liability} (156)
  (May 2009), \url{https://ideas.repec.org/p/onb/oenbwp/156.html}

\bibitem{GSRB17}
{Groote Schaarsberg}, M., Reijnierse, H., Borm, P.: On solving mutual liability
  problems. Mathematical Methods of Operations Research  \textbf{87},  383--409
  (2018)

\bibitem{Kaminski00}
Kaminski, M.M.: Hydraulic rationing. Mathematical Social Sciences
  \textbf{40}(2),  131--155 (2000).
  \doi{https://doi.org/10.1016/S0165-4896(99)00045-1},
  \url{https://www.sciencedirect.com/science/article/pii/S0165489699000451}

\bibitem{Koutsoupias:2009}
Koutsoupias, E., Papadimitriou, C.: Worst-case equilibria. Comput. Sci. Rev.
  \textbf{3}(2),  65--69 (2009)

\bibitem{kusnetsov2019interbank}
Kusnetsov, M., Veraart, L.A.M.: Interbank clearing in financial networks with
  multiple maturities. SIAM Journal on Financial Mathematics  \textbf{10}(1),
  37--67 (2019)

\bibitem{nisan_roughgarden_tardos_vazirani_2007}
Nisan, N., Roughgarden, T., Tardos, E., Vazirani, V.E.: Algorithmic Game
  Theory. Cambridge University Press (2007). \doi{10.1017/CBO9780511800481}

\bibitem{wattenhofer2020network}
Papp, P., Wattenhofer, R.: Network-aware strategies in financial systems. In:
  Proceedings of the 47th International Colloquium on Automata, Languages, and
  Programming (ICALP). pp. 90:1--90:15 (2020)

\bibitem{rogers2013failure}
Rogers, L.C., Veraart, L.A.M.: Failure and rescue in an interbank network.
  Management Science  \textbf{59}(4),  882--898 (2013)

\bibitem{SS20}
Schuldenzucker, S., Seuken, S.: Portfolio compression in financial networks:
  Incentives and systemic risk. In: Proceedings of the 21st {ACM} Conference on
  Economics and Computation (EC). p.~79 (2020)

\bibitem{SSB17}
Schuldenzucker, S., Seuken, S., Battiston, S.: {Finding clearing payments in
  financial networks with credit default swaps is PPAD-complete}. In:
  Proceedings of the 8th Innovations in Theoretical Computer Science Conference
  (ITCS). pp. 32:1--32:20 (2017)

\bibitem{DefaultAmbiguity19}
Schuldenzucker, S., Seuken, S., Battiston, S.: Default ambiguity: Credit
  default swaps create new systemic risks in financial networks. Management
  Science  \textbf{66}(5),  1981--1998 (2020)

\bibitem{Schulz-pos03}
Schulz, A.S., Stier-Moses, N.: On the performance of user equilibria in traffic
  networks. In: Proceedings of the Fourteenth Annual ACM-SIAM Symposium on
  Discrete Algorithms (SODA). pp. 86--87 (2003)

\bibitem{vitali2011network}
Vitali, S., Glattfelder, J.B., Battiston, S.: The network of global corporate
  control. PloS one  \textbf{6}(10),  e25995 (2011)

\end{thebibliography}

\newpage
\appendix 

\section{Proof of Theorem \ref{thm:pp-no-nash-equilibria} (cont'd)}\label{app:no-nash-equilibria}

\begin{table}[h!]
\centering
  \begin{tabular}{| c || c | c | c |}
    \hline
      \multicolumn{4}{| c |}{$s_1 = (v_6|v_7)$} \\\hline
    & $s_3 = (v_1|v_5)$  & $s_3 = (v_1, v_5)$  & $s_3 = (v_5|v_1)$  \\
    \hline
    $s_2 = (v_1|v_4)$ & $9, 4.5, 4.5$ & $\frac{4.5-\varepsilon}{1-\varepsilon}, 4.5, \frac{3.5}{1-\varepsilon}$ & $4.5, 4.5, 3.5$ \\
    \hline
    $s_2 = (v_1, v_4)$  & $2+\frac{5\varepsilon}{1-\varepsilon}, \frac{5}{1-\varepsilon}, 2$ & $\frac{6\varepsilon}{1-\varepsilon}, \frac{3+3\varepsilon}{1-\varepsilon}, 3$ & $\frac{3\varepsilon}{1-\varepsilon}, \frac{3}{1-\varepsilon}, 3$ \\
    \hline
    $s_2 = (v_4|v_1)$  & $2, 5, 2$  & $3\varepsilon, 3+3\varepsilon, 3$  & $0, 3, 3$  \\\hline\hline
      \multicolumn{4}{| c |}{$s_1 = (v_7|v_6)$} \\\hline
    & $s_3 = (v_1|v_5)$  & $s_3 = (v_1, v_5)$  & $s_3 = (v_5|v_1)$  \\
    \hline
    $s_2 = (v_1|v_4)$ & $9, 4.5, 4.5$ & $2+\frac{5\varepsilon}{1-\varepsilon}, 2, \frac{5}{1-\varepsilon}$ & $2, 2, 5$   \\
    \hline
    $s_2 = (v_1, v_4)$  & $\frac{4.5-\varepsilon}{1-\varepsilon}, \frac{3.5}{1-\varepsilon}, 4.5$ &  $\frac{6\varepsilon}{1-\varepsilon}, 3, \frac{3+3\varepsilon}{1-\varepsilon}$ & $3\varepsilon, 3, 3+3\varepsilon$ \\
    \hline
    $s_2 = (v_4|v_1)$  &  $4.5, 3.5, 4.5$ & $\frac{3\varepsilon}{1-\varepsilon}, 3, \frac{3}{1-\varepsilon}$ & 0, 3, 3 \\\hline\hline
      \multicolumn{4}{| c |}{$s_1 = (v_6, v_7)$} \\\hline
    & $s_3 = (v_1|v_5)$  & $s_3 = (v_1, v_5)$  & $s_3 = (v_5|v_1)$  \\
    \hline
    $s_2 = (v_1|v_4)$ &  $9, 4.5, 4.5$ & $\frac{4+6\varepsilon}{1-\varepsilon}, \frac{4+\varepsilon}{1-\varepsilon}, \frac{5}{1-\varepsilon}$ & $4, 4, 5$ \\
    \hline
    $s_2 = (v_1, v_4)$  & $\frac{4+6\varepsilon}{1-\varepsilon}, \frac{5}{1-\varepsilon}, \frac{4+\varepsilon}{1-\varepsilon}$ & $\frac{6\varepsilon}{1-\varepsilon}, \frac{3}{1-\varepsilon}, \frac{3}{1-\varepsilon}$ & $\frac{6\varepsilon}{2-\varepsilon}, \frac{6}{2-\varepsilon}, \frac{6\varepsilon}{2-\varepsilon}$ \\
    \hline
    $s_2 = (v_4|v_1)$  & $4, 5, 4$ &  $\frac{6\varepsilon}{2-\varepsilon}, \frac{6}{2-\varepsilon}, \frac{6}{2-\varepsilon}$ & $0, 3, 3$  \\\hline
  \end{tabular}
\caption{The table of utilities for the network without Nash equilibria; note that $\varepsilon = \frac{6}{M+6}$. Each cell entry contains the utilities of $v_1$, $v_2$, $v_3$ in that order. Each $3\times 3$ subtable corresponds to a fixed strategy for $v_1$.}\label{tab:inexistence}
\end{table}


\end{document}